\definecolor{RED}{RGB}{255,0,0}
\newcommand{\NP}{\mathcal{NP}}
\newcommand{\adj}{{\sf Adj}\xspace}
\newcommand{\vermelho}[1]{{\color{red} #1}}
\newtheorem{fact}{Fact}[section]
\newcolumntype{P}[1]{>{\centering\arraybackslash}p{#1}}
\newcolumntype{M}[1]{>{\centering\arraybackslash}m{#1}}
\newlength{\RoundedBoxWidth}
\newsavebox{\GrayRoundedBox}
\newenvironment{GrayBox}[1]%
   {\setlength{\RoundedBoxWidth}{.93\textwidth}
    \def\boxheading{#1}
    \begin{lrbox}{\GrayRoundedBox}
       \begin{minipage}{\RoundedBoxWidth}}%
   {   \end{minipage}
    \end{lrbox}
    \begin{center}
    \begin{tikzpicture}%
       \node(Text)[draw=black!20,fill=white,rounded corners,%
             inner sep=2ex,text width=\RoundedBoxWidth]%
             {\usebox{\GrayRoundedBox}};
        \coordinate(x) at (current bounding box.north west);
        \node [draw=white,rectangle,inner sep=3pt,anchor=north west,fill=white]
        at ($(x)+(6pt,.75em)$) {\boxheading};
    \end{tikzpicture}
    \end{center}}
\newenvironment{defproblemx}[2][]{\noindent\ignorespaces%
                                \FrameSep=6pt%
                                \parindent=0pt%
                \vspace*{-1.5em}
                \ifthenelse{\isempty{#1}}{%
                  \begin{GrayBox}{\textsc{#2}}%
                }{%
                  \begin{GrayBox}{\textsc{#2} parameterized by~{#1}}%
                }
                \begin{tabular*}{\textwidth}{@{\hspace{.1em}} >{\itshape} p{1.8cm} p{0.8\textwidth} @{}}%
            }{
                \end{tabular*}%
                \end{GrayBox}%
                \ignorespacesafterend
            }
\newcommand{\defproblema}[3]{
  \begin{defproblemx}{#1}
    {\bf Instance:}  & #2 \\
    {\bf Question:} & #3
  \end{defproblemx}
}
\newtheorem*{customprop*}{Proposition}
\newtheorem*{customlmm*}{Lemma}
\newtheorem*{customdef*}{Definition}
\newtheorem*{customthm*}{Theorem}
\newtheorem*{customclaim*}{Claim}
\newtheorem*{customcor*}{Corollary}
\newtheorem*{customobs*}{Observation}
  \definecolor{mid-green}{rgb}{0.15,0.65,0.15}
 \definecolor{dark-green}{rgb}{0.15,0.25,0.15}
 \definecolor{dark-red}{rgb}{0.7,0.15,0.15}
 \definecolor{dark-blue}{rgb}{0.15,0.15,0.9}
 \definecolor{medium-blue}{rgb}{0,0,0.5}
 \definecolor{gray}{rgb}{0.5,0.5,0.5}
 \definecolor{color-Ig}{rgb}{0.15,0.7,0.15}
 \definecolor{darkmagenta}{rgb}{0.30, 0.0, 0.30}
 \definecolor{blue}{rgb}{0.15,0.15,0.9}
\newenvironment{skproof}{\noindent\emph{Sketch of Proof.}}{\qed}
\renewcommand{\NP}{{\sf NP}\xspace}
\title{Determining subtree movement distance and consensus between cell trees} 
\author{Lu\'is Cunha}{Instituto de Computação, Universidade Federal Fluminense, Brasil \and \url{http://www.ic.uff.br/~lfignacio} }{lfignacio@ic.uff.br}{https://orcid.org/0000-0002-3797-6053}{FAPERJ-JCNE (E-26/201.372/2022), and CNPq-Universal (406173/2021-4).}
\author{Jack Kuipers}{D-BSSE, ETH Zurich, Basel, Switzerland \and \url{} }{jack.kuipers@bsse.ethz.ch}{https://orcid.org/0000-0001-5357-2705}{}
\author{Thiago Nascimento}{Instituto de Computação, Universidade Federal Fluminense, Brasil \and \url{} }{thiago\_nascimento@id.uff.br}{https://orcid.org/0009-0005-5278-6679}{}
\authorrunning{L. Cunha, J. Kuipers, and T. Nascimento}
\keywords{Tree distances, subtree movement operations, median problem, closest problem} 
\begin{document}

\nolinenumbers
\maketitle

\begin{abstract}
The mutational heterogeneity of tumors can be described with a tree representing the evolutionary history of the tumor. With noisy sequencing data there may be uncertainty in the inferred tree structure, while we may also wish to study patterns in the evolution of cancers in different patients. In such settings, understanding similarities between trees and which mutations are required to transform one tree into another is a key challenge. To address this, we define a tree operation called \emph{subtree movement} (\emph{SBM}), and we prove that deciding the SBM distance between two trees is \NP-complete. Nevertheless, we show that the SBM distance is closely related to the \emph{maximum common almost $v$-tree} problem (\emph{MCAT}), which is solvable in polynomial time, thus providing an upper bound on the SBM distance. Another significant challenge arises from the inherent noise in current statistical methods for constructing mutation evolution trees of cancer cells: analyzing such collections of trees to determine a consensus tree that accurately represents the set and evaluating the extent of their variability or dispersion. Given a set of cell trees and a notion of distance, 
there are at least two natural ways to define the “target” tree: as a min-sum representative (median tree) or as a min-max representative (closest tree). 
Thus, considering a set of trees as input, we study the \emph{median} and \emph{closest} problems, and show that both problems are \NP-complete, even when restricted to three input trees, under SBM operations defined by the MCAT solution. In addition, 
we develop algorithms to obtain upper bounds on the {\sc median} and {\sc closest} solutions and evaluate them experimentally on synthetic and real datasets. 
Our experiments indicate that the solutions obtained by the proposed algorithms summarize the input cell trees better than any tree in the input set.
\end{abstract}

\newpage

\section{Introduction}
\label{sec:introduction}


Phylogenetic trees are widely used to study species and virus evolution~\cite{gorbalenya2017phylogeny}, and single-cell sequencing inference has come to play an important role in tumor evolution. 
As cancers progress, the tumor cells accumulate complex and diverse genomic aberrations, which may then allow the tumor to further proliferate and evolve. Sequencing tumors at the level of individual cells can provide a high-resolution reconstruction of the evolutionary histories of cancers~\cite{kuipers2017advances}. Across cohorts of patients \cite{morita2020clonal}, common evolutionary patterns can be learned and used to model and predict future evolution~\cite{luo2023joint}.

Unlike classical phylogenetics, which infers lineage trees \cite{gorbalenya2017phylogeny}, a major focus in computational oncology has been reconstructing the mutation event trees, where the nodes are the genomic aberrations themselves, rather than the cells of the tumor~\cite{kuipers2017advances}. This representation may be advantageous from a computational perspective, depending on the sequencing technology employed and the number of mutations and cells. Moreover, the event tree directly represents the set of genotypes present in different clones of a tumor, along with ancestral states, and thus directly describes the sets of cell populations that may develop resistance to treatment and lead to relapse~\cite{mcgranahan2015biological}.

A tumor typically arises from a single founder cell whose distinct set of genetic (and epigenetic) lesions gives it a growth advantage over the surrounding cells and helps it evade the patient’s immune response. 
As a consequence, the clone is able to expand even further and develops subclones with additional somatic mutations~\cite{nik2012life}.
It is believed that the high genetic diversity generated by this process is a major cause of relapse after cancer treatment. 
This is because drug therapy often targets the dominant subclone, allowing the expansion of a suppressed subclone~\cite{gillies2012evolutionary}.

Since single-cell sequencing is technically challenging, high noise and error rates in the data may lead to uncertainty in the phylogenetic tree structure, which we can characterize through bootstrapping or collecting samples of trees from their posterior distribution \cite{jahn2016tree}. 
One challenge is how to handle such sets of combinatorial objects, for example, by finding a consensus tree that summarizes the set and by quantifying how widely spread the trees are. 
When considering a cohort of patients, further challenges arise in understanding the distribution of and distances between trees from different realizations of tumor evolution and in extracting common tree patterns across the cohort.

Data obtained from single-cell sequencing experiments often contain errors 
such as false negatives, false positives, and missing data, as illustrated in \autoref{fig:tumor}(i). Such noisy mutation matrices represent imperfect evolutionary histories, often referred to as \emph{imperfect trees}. 
\autoref{fig:tumor}(ii) presents an ideal instance where all data are present and correct. 
From this mutation matrix, we can derive a tree representation of tumor evolution, as shown in \autoref{fig:tumor}(iii). 
Internal nodes represent inferred ancestral states in the hierarchy along which mutations accumulate, while leaves represent the sequenced cells.
This creates an association with tree problems studied in graph theory.




To make better use of all the data, probabilistic approaches are a viable alternative, and, instead of finding a single tree, the results of the algorithms are a set of co-optimal trees~\cite{jahn2016tree}. 
Therefore, we define a distance measure between trees, called \emph{subtree movement} (SBM) distance, and use it to summarize a set of cell trees by a single representative tree.


\begin{figure}[ht]
    \begin{center}
    \resizebox{0.6\linewidth}{!}{
        \definecolor{light_green}{RGB}{160,219,142}
\definecolor{gray}{RGB}{192,214,228}
\definecolor{blue}{RGB}{51,153,255}
\definecolor{cyan}{RGB}{0,206,209}
\definecolor{hippie-blue}{RGB}{0,51,102}
\definecolor{red}{RGB}{255,100,100}

\newcommand{\tstar}[5]{
\pgfmathsetmacro{\starangle}{360/#3}
\draw[#5] (#4:#1)
\foreach \x in {1,...,#3}
{ -- (#4+\x*\starangle-\starangle/2:#2) -- (#4+\x*\starangle:#1)
}
-- cycle;
}

\scalebox{0.7}{
\begin{minipage}{4.5cm}

\scalebox{0.7}{

\begin{tikzpicture}

    \draw (0,-2) -- (0,4);
    \draw (-1,3) -- (4,3);

    \filldraw[gray] (0.5,3.5) circle (10pt) node[anchor=south]{};

    \filldraw[gray] (1.5,3.5) circle (10pt) node[anchor=south]{};

    \filldraw[gray] (2.5,3.5) circle (10pt) node[anchor=south]{};

    \filldraw[gray] (3.5,3.5) circle (10pt) node[anchor=south]{};

    \begin{scope}[shift={(0.35,3.5)}]
    \tstar{0.075}{0.125}{7}{10}{thick,fill=light_green}
    \end{scope}
    
    \begin{scope}[shift={(0.65,3.5)}]
    \tstar{0.075}{0.125}{7}{10}{thick,fill=red}
    \end{scope}

    \begin{scope}[shift={(1.5,3.5)}]
    \tstar{0.075}{0.125}{7}{10}{thick,fill=light_green}
    \end{scope}

    \begin{scope}[shift={(2.3,3.45)}]
    \tstar{0.075}{0.125}{7}{10}{thick,fill=cyan}
    \end{scope}
    
    \begin{scope}[shift={(2.7,3.45)}]
    \tstar{0.075}{0.125}{7}{10}{thick,fill=blue}
    \end{scope}
    
    \begin{scope}[shift={(2.5,3.7)}]
    \tstar{0.075}{0.125}{7}{10}{thick,fill=light_green}
    \end{scope}

    \begin{scope}[shift={(3.3,3.45)}]
    \tstar{0.075}{0.125}{7}{10}{thick,fill=hippie-blue}
    \end{scope}
    
    \begin{scope}[shift={(3.7,3.45)}]
    \tstar{0.075}{0.125}{7}{10}{thick,fill=blue}
    \end{scope}
    
    \begin{scope}[shift={(3.5,3.7)}]
    \tstar{0.075}{0.125}{7}{10}{thick,fill=light_green}
    \end{scope}

    \begin{scope}[shift={(-0.5,2.5)}]
    \tstar{0.125}{0.25}{7}{10}{thick,fill=light_green}
    \end{scope}

    \begin{scope}[shift={(-0.5,1.5)}]
    \tstar{0.125}{0.25}{7}{10}{thick,fill=blue}
    \end{scope}

    \begin{scope}[shift={(-0.5,0.5)}]
    \tstar{0.125}{0.25}{7}{10}{thick,fill=red}
    \end{scope}

    \begin{scope}[shift={(-0.5,-0.5)}]
    \tstar{0.125}{0.25}{7}{10}{thick,fill=cyan}
    \end{scope}

    \begin{scope}[shift={(-0.5,-1.5)}]
    \tstar{0.125}{0.25}{7}{10}{thick,fill=hippie-blue}
    \end{scope}

    \node at (0.5,2.5) {\large $1$};
    \node at (1.5,2.5) {\large $1$};
    \node at (2.5,2.5) {\large $1$};
    \node at (3.5,2.5) {\large $1$};

    \node at (0.5,1.5) {\large $0$};
    \node at (1.5,1.5) {\large \begin{color}{red}\textbf{$0$}\end{color}};
    \node at (2.5,1.5) {\large $1$};
    \node at (3.5,1.5) {\large $1$};

    \node at (0.5,0.5) {\large $1$};
    \node at (1.5,0.5) {\large $0$};
    \node at (2.5,0.5) {\large $0$};
    \node at (3.5,0.5) {\large $0$};

    \node at (0.5,-0.5) {\large $0$};
    \node at (1.5,-0.5) {\large \begin{color}{red}\textbf{$1$}\end{color}};
    \node at (2.5,-0.5) {\large $1$};
    \node at (3.5,-0.5) {\large $0$};

    \node at (0.5,-1.5) {\large $0$};
    \node at (1.5,-1.5) {\large $0$};
    \node at (2.5,-1.5) {\large \begin{color}{red}\textbf{?}\end{color}};
    \node at (3.5,-1.5) {\large $1$};

    \node at (2,-2.5) {\Large $(i)$};

\end{tikzpicture}

}
\end{minipage}

\begin{minipage}{4.5cm}

\scalebox{0.7}{

\begin{tikzpicture}

    \draw (0,-2) -- (0,4);
    \draw (-1,3) -- (4,3);

    \filldraw[gray] (0.5,3.5) circle (10pt) node[anchor=south]{};

    \filldraw[gray] (1.5,3.5) circle (10pt) node[anchor=south]{};

    \filldraw[gray] (2.5,3.5) circle (10pt) node[anchor=south]{};

    \filldraw[gray] (3.5,3.5) circle (10pt) node[anchor=south]{};

    \begin{scope}[shift={(0.35,3.5)}]
    \tstar{0.075}{0.125}{7}{10}{thick,fill=light_green}
    \end{scope}
    
    \begin{scope}[shift={(0.65,3.5)}]
    \tstar{0.075}{0.125}{7}{10}{thick,fill=red}
    \end{scope}

    \begin{scope}[shift={(1.5,3.5)}]
    \tstar{0.075}{0.125}{7}{10}{thick,fill=light_green}
    \end{scope}

    \begin{scope}[shift={(2.3,3.45)}]
    \tstar{0.075}{0.125}{7}{10}{thick,fill=cyan}
    \end{scope}
    
    \begin{scope}[shift={(2.7,3.45)}]
    \tstar{0.075}{0.125}{7}{10}{thick,fill=blue}
    \end{scope}
    
    \begin{scope}[shift={(2.5,3.7)}]
    \tstar{0.075}{0.125}{7}{10}{thick,fill=light_green}
    \end{scope}

    \begin{scope}[shift={(3.3,3.45)}]
    \tstar{0.075}{0.125}{7}{10}{thick,fill=hippie-blue}
    \end{scope}
    
    \begin{scope}[shift={(3.7,3.45)}]
    \tstar{0.075}{0.125}{7}{10}{thick,fill=blue}
    \end{scope}
    
    \begin{scope}[shift={(3.5,3.7)}]
    \tstar{0.075}{0.125}{7}{10}{thick,fill=light_green}
    \end{scope}

    \begin{scope}[shift={(-0.5,2.5)}]
    \tstar{0.125}{0.25}{7}{10}{thick,fill=light_green}
    \end{scope}

    \begin{scope}[shift={(-0.5,1.5)}]
    \tstar{0.125}{0.25}{7}{10}{thick,fill=blue}
    \end{scope}

    \begin{scope}[shift={(-0.5,0.5)}]
    \tstar{0.125}{0.25}{7}{10}{thick,fill=red}
    \end{scope}

    \begin{scope}[shift={(-0.5,-0.5)}]
    \tstar{0.125}{0.25}{7}{10}{thick,fill=cyan}
    \end{scope}

    \begin{scope}[shift={(-0.5,-1.5)}]
    \tstar{0.125}{0.25}{7}{10}{thick,fill=hippie-blue}
    \end{scope}

    \node at (0.5,2.5) {\large $1$};
    \node at (1.5,2.5) {\large $1$};
    \node at (2.5,2.5) {\large $1$};
    \node at (3.5,2.5) {\large $1$};

    \node at (0.5,1.5) {\large $0$};
    \node at (1.5,1.5) {\large $0$};
    \node at (2.5,1.5) {\large $1$};
    \node at (3.5,1.5) {\large $1$};

    \node at (0.5,0.5) {\large $1$};
    \node at (1.5,0.5) {\large $0$};
    \node at (2.5,0.5) {\large $0$};
    \node at (3.5,0.5) {\large $0$};

    \node at (0.5,-0.5) {\large $0$};
    \node at (1.5,-0.5) {\large $0$};
    \node at (2.5,-0.5) {\large $1$};
    \node at (3.5,-0.5) {\large $0$};

    \node at (0.5,-1.5) {\large $0$};
    \node at (1.5,-1.5) {\large $0$};
    \node at (2.5,-1.5) {\large $0$};
    \node at (3.5,-1.5) {\large $1$};

    \node at (2,-2.5) {\Large $(ii)$};

\end{tikzpicture}

}
\end{minipage}

\begin{minipage}{4.5cm}
    
\scalebox{0.7}{
\begin{tikzpicture}

\fill[light_green] (0,0) node[anchor=north]{}
  -- (6,0) node[anchor=north]{}
  -- (3,4) node[anchor=south]{}
  -- cycle;
  
\begin{scope}[shift={(3,4)}]
\tstar{0.125}{0.25}{7}{10}{thick,fill=light_green}
\end{scope}

\fill[red] (0,0) node[anchor=north]{}
  -- (2,0) node[anchor=north]{}
  -- (1.125,1.5) node[anchor=south]{}
  -- cycle;

\begin{scope}[shift={(1.125,1.5)}]
\tstar{0.125}{0.25}{7}{10}{thick,fill=red}
\end{scope}

\fill[blue] (3,0) node[anchor=north]{}
  -- (6,0) node[anchor=north]{}
  -- (3.5,3) node[anchor=south]{}
  -- cycle;

\begin{scope}[shift={(3.5,3)}]
\tstar{0.125}{0.25}{7}{10}{thick,fill=blue}
\end{scope}

\fill[cyan] (3,0) node[anchor=north]{}
  -- (4.5,0) node[anchor=north]{}
  -- (3.5,1.5) node[anchor=south]{}
  -- cycle;

\begin{scope}[shift={(3.5,1.5)}]
\tstar{0.125}{0.25}{7}{10}{thick,fill=cyan}
\end{scope}

\fill[hippie-blue] (4.5,0) node[anchor=north]{}
  -- (6,0) node[anchor=north]{}
  -- (4.5,1.5) node[anchor=south]{}
  -- cycle;

\begin{scope}[shift={(4.5,1.5)}]
\tstar{0.125}{0.25}{7}{10}{thick,fill=hippie-blue}
\end{scope}

\filldraw[gray] (1,-0.25) circle (10pt) node[anchor=south]{};

\filldraw[gray] (2.5,-0.25) circle (10pt) node[anchor=west]{};

\filldraw[gray] (3.75,-0.25) circle (10pt) node[anchor=west]{};

\filldraw[gray] (5.25,-0.25) circle (10pt) node[anchor=west]{};

\draw (1,-0.6) node  [anchor=north] {\Large $\pi_1$};
\draw (2.5,-0.6) node  [anchor=north] {\Large $\pi_2$};
\draw (3.75,-0.6) node  [anchor=north] {\Large $\pi_3$};
\draw (5.25,-0.6) node  [anchor=north] {\Large $\pi_4$};

\node at (3,-2.5) {\Large $(iii)$};

\begin{scope}[shift={(0.85,-0.25)}]
\tstar{0.075}{0.125}{7}{10}{thick,fill=light_green}
\end{scope}

\begin{scope}[shift={(1.15,-0.25)}]
\tstar{0.075}{0.125}{7}{10}{thick,fill=red}
\end{scope}

\begin{scope}[shift={(2.5,-0.25)}]
\tstar{0.075}{0.125}{7}{10}{thick,fill=light_green}
\end{scope}

\begin{scope}[shift={(3.55,-0.3)}]
\tstar{0.075}{0.125}{7}{10}{thick,fill=cyan}
\end{scope}

\begin{scope}[shift={(3.95,-0.3)}]
\tstar{0.075}{0.125}{7}{10}{thick,fill=blue}
\end{scope}

\begin{scope}[shift={(3.75,-0.1)}]
\tstar{0.075}{0.125}{7}{10}{thick,fill=light_green}
\end{scope}

\begin{scope}[shift={(5.05,-0.3)}]
\tstar{0.075}{0.125}{7}{10}{thick,fill=hippie-blue}
\end{scope}

\begin{scope}[shift={(5.45,-0.3)}]
\tstar{0.075}{0.125}{7}{10}{thick,fill=blue}
\end{scope}

\begin{scope}[shift={(5.25,-0.1)}]
\tstar{0.075}{0.125}{7}{10}{thick,fill=light_green}
\end{scope}

\end{tikzpicture}
}
\end{minipage}


}
    }
    \end{center}
\vspace{-.3cm}
    \caption{
    \textbf{(i)} Mutation matrix representing the mutation status of the sequenced tumor cells. A zero entry denotes the absence of a mutation in the respective cell, while a one denotes its presence. \vermelho{\textbf{$0$}}, \vermelho{\textbf{$1$}} and \vermelho{\textbf{?}} denote false negative, false positive and missing data, respectively, that may occur in a real scenario.
    \textbf{(ii)} Ideal mutation matrix representing the mutation status of the sequenced tumor cells.
    \textbf{(iii)} Representation of tumor evolution from \textbf{(ii)}.
    Each star represents a new mutation and an expansion of a subclone.
    The circles represent single cells sequenced after tumor removal and the stars inside indicate which mutation is present in each cell.
    }\label{fig:tumor}
\end{figure}

Given the inherent errors in single-cell sequencing, the results of these experiments yield a set of viable candidate trees, each corresponding to a different interpretation of the noisy data. 
Based on this set, we seek to compute a consensus tree through the {\sc Median} and {\sc Closest} formulations. 
These are classical problems used to determine a representative ancestral element from a given set of elements, typically genomes, in the genome rearrangement literature~\cite{bader2011transposition,caprara2003reversal,cunha2020computational,cunha2019genome,haghighi2012medians,pe1998median}. 
Our approach relies on the SBM distance defined in this paper.
In the {\sc Median} problem, the goal is to find a solution that minimizes the sum of the distances to all input elements. 
Alternatively, the {\sc Closest} problem seeks an element that minimizes the maximum distance to the input set. 

\subparagraph{Results.}
Building on these connections, in \autoref{sec:background} we introduce our framework for computing distances between cell trees via \emph{subtree movement} (SBM). In \autoref{sec:distances}, we prove that determining the SBM distance is \NP-complete.
In \autoref{sec:MCATsectionnew}, we present a method for computing this distance. By defining the {\sc Maximum Common Almost $v$-tree} (MCAT) problem, we obtain a polynomial-time upper bound on the SBM distance.
In \autoref{sec:closest}, we extend the framework to instances with multiple input trees and study the min–sum ({\sc Median}) and min–max ({\sc Closest}) variants. We show that both problems are \NP-complete even when restricted to three input trees.
Despite this hardness, we design algorithms that compute solutions whose distances are close to the best lower bounds we can obtain, offering practical and informative estimates of the optimal values. 
Experimental results are reported in~\autoref{sec:exp}.
Appendix~\ref{sec:appOperations} illustrates an SBM operation. 
Additional proofs appear in Appendices~\ref{app:B} and~\ref{sec:appB}, and results proved in the appendices are marked with~`$(\star)$'.

\subparagraph{Relevance of the proposed approaches.}

At the core of our framework lies the definition of the subtree movement (SBM) operation, a measure tailored to the hierarchical constraints of cell trees. 
We prove that computing this distance is \NP-complete and introduce methods that yield efficient upper bounds for general inputs. 
These contributions support core tasks in single-cell phylogenetics, including quantifying distances between trees, solving consensus problems ({\sc median} and {\sc closest} trees), and assessing variability across sets of cell trees. 
In this way, our work bridges algorithmic theory and biological applications in the reconstruction and interpretation of tumor evolution.

Tree-editing operations have long been used to compare evolutionary structures, with classical metrics such as nearest-neighbor interchange (NNI), subtree prune-and-regraft (SPR), and tree bisection and reconnection (TBR)\footnote{the NNI operation swaps two subtrees around an internal node; the SPR operation removes one edge from the tree, creating two subtrees, and reconnects them in a new edge between a new vertex created by a subdivision of an edge in one subtree and the vertex that lost the edge in the other subtree; the TBR operation, similarly to the SPR operation, removes one edge from the tree and reconnects the tree by adding a new edge between two new vertices, each created by subdividing an edge in one of the two subtrees (see~\cite{allen2001subtree} for detailed definitions of NNI, SPR, and TBR).}
~\cite{dasgupta2000computing,allen2001subtree}.
While these operations can be adapted to leaf-labeled settings, they remain largely topological and do not capture the hierarchical constraints present in cell trees, where leaves represent cells or clones derived from a common ancestor. 
For instance, the NNI operation swaps a restricted number of elements, while the SPR and the TBR operations function in a global setting; therefore, 
these latter two operations do not clearly describe intermediate evolutionary states in cancer cells. 
To model such constrained rearrangements, the SBM operation moves subtrees in a controlled manner while preserving biological structure.
In this way, the historical evolution of a tumor is preserved. 
In this sense, SBM lies between local and global transformations and provides a biologically informed framework for distance and consensus computation. 

Beyond the hardness results for SBM, we show that it is closely related to the {\sc Maximum Common Almost $v$-tree} (MCAT) problem, which we prove to be solvable in polynomial time. However, we also establish that the {\sc Closest} and {\sc Median} problems remain  \NP-complete even when restricted to the operations induced by MCAT. In spite of this hardness, we develop algorithms that compute informative upper bounds for both problems, relying on generalizations of MCAT and additional greedy strategies. Our experimental evaluation on synthetic and real datasets uses min–max normalization, taking the best input tree as the upper bound and known lower bounds for the {\sc closest} and {\sc median} problems for comparison. The results consistently show that the trees produced by our methods outperform all input trees, providing further empirical support for the effectiveness of our approach.



\section{Subtree movement and preliminaries}\label{sec:background}

Each rooted tree is represented by nested parentheses, with labels assigned to its leaves (\autoref{fig:trees1}). 
Throughout this work, we consider rooted leaf-labeled trees, where the observed data correspond to individual cells or clones at the leaves, while internal nodes encode only the hierarchical structure among them. 
We refer to these trees as \emph{cell trees}. 
Thus, internal labels that may appear in some Newick representations or datasets are treated as annotations and are not part of the comparison model considered here. 
This representation differs from node-labeled trees, often called \emph{mutation trees}, in which internal nodes correspond to specific mutation events~\cite{govek2018consensus,aguse2019summarizing}. 
Labels are represented as integers, and the hierarchical relationships are expressed through nested brackets in the \emph{Newick format}~\cite{cardona2008extended}. 
Although Newick can include branch lengths, bootstrap values, or internal labels, we consider all edges to have equal length and use only leaf labels in the tree-comparison model.


\begin{figure}[!ht]
    \centering
    \includegraphics[width=.55\textwidth]{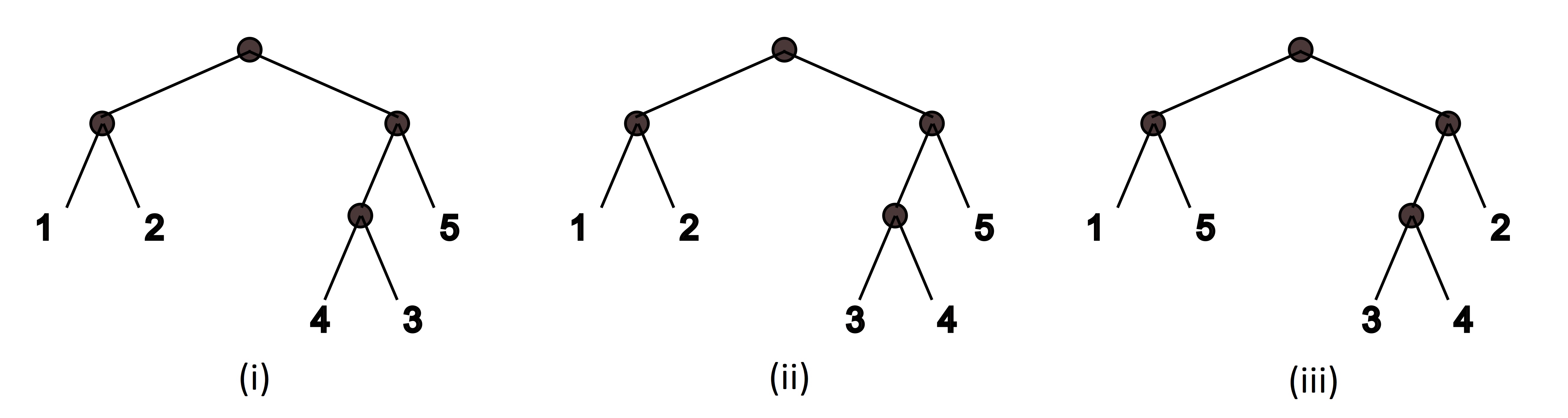}
\vspace{-.3cm}
    \caption{Rooted trees and their Newick sequences of length $5$. {\bf (i)} Tree associated with $\pi_1=((1,2),((4,3),5))$, which is equal to the tree in {\bf (ii)} associated with $\pi_2=((1,2),((3,4),5))$. Both trees are distinct from the tree depicted in {\bf (iii)} associated with $\pi_3=((1,5),((3,4),2))$. 
    Therefore, even though $\pi_1 \neq \pi_2$, the tree corresponding to $\pi_1$ is the same as the tree corresponding to $\pi_2$.
    \label{fig:trees1}}
\end{figure}


If two elements are enclosed in the same bracket, any ordering between them is associated with the same tree. 
Therefore, the trees represented in \autoref{fig:trees1}(i) and (ii) are equal.
Given a sequence $p$ described in the Newick format, the \emph{length of $p$} is defined as the number of integers represented in $p$.
Let $v$ be a non-root node of $T_p$, and let $u$ be the parent of $v$ in $T_p$. 
A \emph{subtree movement} of $v$ in $T_p$ removes the edge $(u,v)$ and adds an edge $(j,v)$, where $j$ is either the parent of $u$ or another child of $u$ in $T_p$. 
An example is provided in Appendix~\ref{sec:appOperations}.



A \emph{rooted subtree at a vertex $v$} of a tree $T$ is defined as the subgraph of $T$ induced by $v$ and all its descendants~\cite{szekely2005subtrees}. 
Let the \emph{depth of a vertex $v$} in $T$ be the number of edges in the path between the root of $T$ and $v$. 
Let the \emph{height of a vertex $v$} in $T$ be the number of edges in the longest path from $v$ to a leaf. 
The \emph{height of $T$} is defined as the maximum depth among all its vertices.
A subtree movement of $v$ 
changes the depth of $v$ and its descendants in $T$, increasing their depth by one if the new edge is between $v$ and another child of $u$, or decreasing their depth by one if the new edge is between $v$ and the parent of $u$. 

The distance $d_{SBM}(T_p, T_q)$ is defined as the minimum number of subtree movements required to transform $T_p$ into $T_q$. We now define the decision version of this problem. 

\defproblema{Subtree movement distance}
{Two trees, $T_p$ and $T_q$, with associated Newick sequences $p$ and $q$, of the same length, and an integer $k$.}
{Is $d_{SBM}(T_p, T_q) \leq k$?}


In our leaf-labeled model, internal nodes encode the hierarchical structure among the observed cells, rather than carrying labels that are part of the comparison. 
Thus, a unary internal node does not create additional branching information and is collapsed into its adjacent lineage. 
Hence, we assume that: (i) each pair of brackets contains at least two elements; and (ii) the number of internal nodes of a tree is equal to the number of bracket pairs.

\section{Subtree movement distance is \NP-complete}\label{sec:distances}



We reduce from {\sc Exact Cover by 3-Sets} ({\sc EC3S}), an \NP-complete problem~\cite{johnson1979computers}, adapting the structure of the \NP-completeness proof for the {\sc NNI} distance~\cite{dasgupta2000computing} to the SBM setting. We first present a proof sketch, followed by the full details.

\defproblema{Exact Cover by $3$-Sets}
{A set of integers $S=\{s_1,\cdots,s_m\}$, where $m=3q$ and a collection of subsets $C_1,\cdots, C_n$, where $C_i=\{s_{i_1},s_{i_2},s_{i_3}\} \subset S$.}
{Is there an \emph{exact cover} of $S$ using $q$ elements of $C$, i.e., are there $q$ disjoint subsets $C_{i_1},\ldots,C_{i_q}$ such that $\bigcup_{j=1}^q C_{i_j}=S$?}


\begin{theorem}\label{thm:dsm_np}
Let $T_1$ and $T_2$ be two trees of the same length and $Q$ an integer. It is  \NP-complete to decide if $d_{SBM}(T_1,T_2) \leq Q$.
\end{theorem}

\begin{skproof}
We construct two cell leaf-labeled trees $T_1$ and $T_2$, based on an input to the EC3S problem, with the objective of transforming $T_1$ into $T_2$ by applying SBM operations.

Each integer in the set $S=\{s_1,\cdots,s_m\}$ is represented by a subtree called a \emph{long sequence}, shown in \autoref{fig:long-sequence}.
A long sequence is a sequence of regions composed of small \emph{coding regions} and larger \emph{noncoding regions}.
The coding region is a subtree called a \emph{sequence}, shown in \autoref{fig:t1}(ii), where each leaf is labeled by an integer.
Furthermore, 
although the long sequences representing an element $s_p \in S$ in $T_1$ and $T_2$ have the same labels, their orders differ, 
which means that $s_p$ in $T_1$ is a distinct permutation of its representation in $T_2$. 
Consequently, in order to transform $T_1$ into $T_2$ it is necessary to move each leaf to its correct position.
The noncoding region is a sufficiently large subtree that makes it inefficient to merge the sequences and sort them. 
Furthermore, each long sequence is connected to the root node $R$ through an edge between $R$ and the first coding region.

Each subset $C_i=\{s_{i_1},s_{i_2},s_{i_3}\} \subset S$ is translated into three long sequences.
Distinct long sequences of the same integer in $S$ differ by shifting the labels of the leaves each time this integer appears in a subset but keeping the same relative order.
For example, for an integer $s_p \in S$, 
the first time $s_p$ appears in a subset $C$, all labels are shifted by some integer $\lambda$; the second time it appears, they are shifted by $2\lambda$.
The long sequences representing the subsets $C_i$ are connected to the exit of a subtree called a \emph{one-way circuit}. 
\autoref{fig:oneway}\textbf{(i)} (\autoref{fig:oneway}\textbf{(ii)}) depicts the one-way circuit in $T_1$ ($T_2$). 
Each entrance of these three one-way circuits is connected to the exit of another one-way circuit.
Lastly, each exit of these last one-way circuits is connected to a \emph{toll subtree}, which is connected to the root of the tree. 
This toll subtree, which is similar to the noncoding region of a long sequence, serves as a barrier, making it inefficient to move the sequences through the subtree by a series of SBM operations individually.
This construction allows each of the three long sequences of a subset $C_i$ corresponding to a solution of EC3S to pass through the toll subtree together.

\autoref{fig:t1}(i) shows how $T_1$ is constructed.
$T_2$ is created similarly, except that the leaves in each sequence are positioned in the desired order and the one-way circuit differs as well.

\begin{figure}[ht]
    \begin{center}
            \resizebox{0.6\linewidth}{!}{
        \begin{tikzpicture}

    \draw [ultra thick] (0,0) -- (2,0);
    \draw [ultra thick]  (0,1.5) -- (2,1.5);
    \draw [ultra thick]  (0,2) -- (2,2);

    \draw (2,0) -- (3,3);
    \draw (2,1.5) -- (3,3);
    \draw (2,2) -- (3,3);
    \node[shape=circle,draw=black,thick, fill=white, scale=0.8] at (2,0) {};
    \node[shape=circle,draw=black,thick, fill=white, scale=0.8] at (2,1.5) {};
    \node[shape=circle,draw=black,thick, fill=white, scale=0.8] at (2,2) {};

    \draw[dotted] (1,0.5) -- (1,1);

    \node at (-0.3,1) {\small $S$};

    \draw [dotted, ultra thick](3,3) -- (4.5,0);
    \node[shape=rectangle,draw=black,thick, fill=black, scale=0.8] at (4.5,0) {};

    \node at (4.7,0.25) {\small $b$};
    \node at (4.25,0.1) {\small $a$};

    \draw (4.5,0) -- (6,0);
    \draw (4.5,0) -- (6,0.5);
    \draw (4.5,0) -- (6,1);
    \node[shape=rectangle,draw=black,thick, fill=black, scale=0.8] at (6,0) {};
    \node[shape=rectangle,draw=black,thick, fill=black, scale=0.8] at (6,0.5) {};
    \node[shape=rectangle,draw=black,thick, fill=black, scale=0.8] at (6,1) {};

    \draw (6,0) -- (6.5,0);
    \draw (6,0.5) -- (6.5,0.5);
    \draw (6,1) -- (6.5,1);

    \draw [ultra thick] (6.5,0) -- (8.5,0);
    \draw [ultra thick] (6.5,0.5) -- (8.5,0.5);
    \draw [ultra thick] (6.5,1) -- (8.5,1);

    \node at (8.8,0.5) {\small $C_1$};
    \node[shape=circle,draw=black,thick, fill=white, scale=0.8] at (6.5,0) {};
    \node[shape=circle,draw=black,thick, fill=white, scale=0.8] at (6.5,0.5) {};
    \node[shape=circle,draw=black,thick, fill=white, scale=0.8] at (6.5,1) {};

    \node at (6.2,0.1) {\small $b$};
    \node at (6.2,0.6) {\small $b$};
    \node at (6.2,1.1) {\small $b$};

    \node at (5.75,0.1) {\small $a$};
    \node at (5.75,0.6) {\small $a$};
    \node at (5.75,1.1) {\small $a$};

    \draw[dotted, ultra thick] (3,3) -- (4.5,2.5); 
    \node[shape=rectangle,draw=black,thick, fill=black, scale=0.8] at (4.5,2.5) {};

    \node at (4.7,2.75) {\small $b$};
    \node at (4.25,2.4) {\small $a$};

    \draw (4.5,2.5) -- (6,2.5);
    \draw (4.5,2.5) -- (6,3);
    \draw (4.5,2.5) -- (6,3.5);
    \node[shape=rectangle,draw=black,thick, fill=black, scale=0.8] at (6,2.5) {};
    \node[shape=rectangle,draw=black,thick, fill=black, scale=0.8] at (6,3) {};
    \node[shape=rectangle,draw=black,thick, fill=black, scale=0.8] at (6,3.5) {};

    \node at (6.2,3.6) {\small $b$};
    \node at (6.2,3.1) {\small $b$};
    \node at (6.2,2.6) {\small $b$};

    \node at (5.75,3.6) {\small $a$};
    \node at (5.75,3.1) {\small $a$};
    \node at (5.75,2.6) {\small $a$};

    \draw (6,2.5) -- (6.5,2.5);
    \draw (6,3) -- (6.5,3);
    \draw (6,3.5) -- (6.5,3.5);

    \draw [ultra thick] (6.5,2.5) -- (8.5,2.5);
    \draw [ultra thick] (6.5,3) -- (8.5,3);
    \draw [ultra thick] (6.5,3.5) -- (8.5,3.5);

    \node at (8.8,3) {\small $C_n$};
    \node[shape=circle,draw=black,thick, fill=white, scale=0.8] at (6.5,2.5) {};
    \node[shape=circle,draw=black,thick, fill=white, scale=0.8] at (6.5,3) {};
    \node[shape=circle,draw=black,thick, fill=white, scale=0.8] at (6.5,3.5) {};

    \node[shape=circle,draw=black,thick, fill=white, scale=0.8] at (3,3) {$R$};

    \draw[dotted] (5,1.5) -- (5,2);

\end{tikzpicture}\hspace{1cm}
        \begin{tikzpicture}

    \draw (0,0) -- (4,0);
    \draw (1,0) -- (1,0.5);
    \draw (2,0) -- (2,0.5);
    \draw (4,0) -- (3.6,0.5);
    \draw (4,0) -- (4.4,0.5);

    \draw[dotted, ultra thick] (2.5,0.3) -- (3.5,0.3);


    \node at (1,0.7) {\small $x_1$};
    
    \node at (2,0.7) {\small $x_2$};

    \node at (3.6,0.7) {\small $x_{k-1}$};

    \node at (4.4,0.7) {\small $x_k$};

    \node[shape=circle,draw=black,thick, fill=black, scale=0.6] at (1,0) {};

    \node[shape=circle,draw=black,thick, fill=black, scale=0.6] at (2,0) {};

    \node[shape=circle,draw=black,thick, fill=black, scale=0.6] at (4,0) {};

\end{tikzpicture}
         }
         
    \end{center}
    \hspace{4cm}{\footnotesize (i) \hspace{5cm} (ii)}
    \caption{
    {\bf (i)} Structure of the tree $T_1$.
    The node labeled $R$ represents the root of $T_1$.
    The black squares represent one-way circuits, where the labels $a$ and $b$ represent the entrance and the exit, respectively, of each one-way circuit.
    The thick lines represent long sequences, and the white circle represents the node where the long sequences connect to the rest of the tree.
The dotted lines between the node $R$ and the one-way circuit represent a toll subtree of length~$m^2$. 
    {\bf (ii)} A sequence of length $k$.
    }\label{fig:t1}
\end{figure}

In order to transform $T_1$ into $T_2$ we move the three long sequences of the three integers $s_{i_{1}},s_{i_{2}},s_{i_{3}}$ (on the left side of \autoref{fig:t1}(i)) that compose a subset $C_i$ corresponding to a solution of EC3S to merge them with the long sequences that represent $C_i$ (on the right side of \autoref{fig:t1}(ii)).
After merging, we sort the paired long sequences together (to match the order in $T_2$) and then move them back to their original positions.
In this process (going from left to right and then back from right to left), we also transform the one-way circuit into the corresponding version in $T_2$. 
\end{skproof}

\begin{proof}
Let a \emph{sequence} be a subtree with $k$ distinct leaf-labeled integers, shown in \autoref{fig:t1}(ii), where $k$ is called its length.
Sorting a sequence refers to the process of rearranging the leaves into the desired position through a minimum sequence of SBM operations.


Let $x,y$ be two sequences of length $k$.
For each element $i$ of the set $S=\{1,\cdots,m\}$, create the binary string $\alpha_i$ where $\alpha_i$ represents the element $i$ in its binary format.
Pad the binary strings with leading zeros to ensure all of them have the same number of bits.
The long sequence $S_i \ (1\leq i \leq m)$ is obtained from the string $\alpha_i$ by performing the following actions:
replace each $0$ with $m^3$ copies of $x$.
All copies of $x$ must have the same ordering of labels. However, to make all labels distinct, the label values are shifted in each copy. 
Similarly, replace each $1$ with $m^3$ copies of $y$.
All of these sequences are connected by an edge between the vertices connected to the first leaf of each sequence.
Each sequence of $x$ and $y$ corresponds to a coding region.
Between each sequence, add $k^2$ leaves, corresponding to a noncoding region.
Sorting a sequence requires at most $k \log k$ SBM operations. Therefore, 
the noncoding region of length $k^2$ makes it suboptimal to combine the two sequences, sort them, split them, and then move the first one back to its original position.
This operation creates the long sequences $S_1,\cdots,S_m$, all with distinct labels.
Since any pair of binary strings differs by at least one bit, each pair of long sequences differs by at least $m^3$ sequences $x$ or $y$.
The representation of a long sequence is shown in \autoref{fig:long-sequence}.

\begin{figure}[ht]
    \begin{center}
    \resizebox{0.8\linewidth}{!}{
        \scalebox{0.8}{\begin{tikzpicture}

    \draw (-1,0) -- (15,0);

\foreach \x in {0,3,...,12}{
    
    \draw (\x,0) -- (\x-1,1);
    \draw (\x-0.8,0.8) -- (\x-0.5,1);
    \draw (\x-0.6,0.6) -- (\x-0.3,0.8);
    \draw (\x-0.4,0.4) -- (\x-0.1,0.6);

    \draw (\x+0.4,0) -- (\x+0.4,0.5);
    \draw (\x+0.6,0) -- (\x+0.6,0.5);

    \draw[dotted, ultra thick] (\x+0.8,0.3) -- (\x+1.8,0.3);

    \draw (\x+2,0) -- (\x+2,0.5);
    \draw (\x+2.2,0) -- (\x+2.2,0.5);

    \node[shape=circle,draw=black,thick, fill=black, scale=0.3] at (\x,0) {};

    \node[shape=circle,draw=black,thick, fill=black, scale=0.3] at (\x-0.8,0.8) {};

    \node[shape=circle,draw=black,thick, fill=black, scale=0.3] at (\x-0.6,0.6) {};

    \node[shape=circle,draw=black,thick, fill=black, scale=0.3] at (\x-0.4,0.4) {};

    \node[shape=circle,draw=black,thick, fill=black, scale=0.3] at (\x+0.4,0) {};

    \node[shape=circle,draw=black,thick, fill=black, scale=0.3] at (\x+0.6,0) {};

    \node[shape=circle,draw=black,thick, fill=black, scale=0.3] at (\x+2,0) {};

    \node[shape=circle,draw=black,thick, fill=black, scale=0.3] at (\x+2.2,0) {};
}

    \draw (15,0) -- (14,1);
    \draw (14.2,0.8) -- (14.5,1);
    \draw (14.4,0.6) -- (14.7,0.8);
    \draw (14.6,0.4) -- (14.9,0.6);
    
    \node at (-0.8,1.2) {$x$};
    
    \node at (2.2,1.2) {$x$};

    \node at (5.2,1.2) {$y$};

    \node at (8.2,1.2) {$y$};

    \node at (11.2,1.2) {$y$};

    \node at (14.2,1.2) {$y$};

    \node[shape=circle,draw=black,thick, fill=black, scale=0.3] at (14.2,0.8) {};

    \node[shape=circle,draw=black,thick, fill=black, scale=0.3] at (14.6,0.4) {};

    \node[shape=circle,draw=black,thick, fill=black, scale=0.3] at (14.4,0.6) {};

    \node[shape=circle,draw=black,thick, fill=white, scale=0.6] at (-1,0) {$R$};

\end{tikzpicture}}}
    \end{center}
    \caption{
    The structure of the long sequence $S_i$ corresponding to the binary string $011$, with $k=4$ and, for illustration, $m^3=2$.
    For simplicity, the labels of the leaves are not shown, instead, only $x$ and $y$ are shown to indicate the sequences and the locations of the coding regions. 
    Between each coding region, the noncoding region is represented by the four leaves.
    The connection between the root node $R$ and a long sequence is illustrated by the node labeled $R$.
    }\label{fig:long-sequence}
\end{figure}

From the set $S=\{s_1,\cdots,s_m\}$ and the subsets
$C_1,\cdots,C_n$, construct the tree $T_1$ as follows:
For each $s_i \in S$, create the long sequence $S_i$ as defined above.
Moreover, each long sequence $S_i$ is connected to the root node $R$ through an edge between $R$ and the first coding region of $S_i$.
For each subset $C_i=\{s_{i_1},s_{i_2},s_{i_3}\}$, create three long sequences $S_{i,i_1},S_{i,i_2},S_{i,i_3}$ with the same ordering as $S_{i_1},S_{i_2},S_{i_3}$, respectively, shifting the labels of each leaf.
Each of the three long sequences $S_{i,i_1},S_{i,i_2},S_{i,i_3}$ is connected to the exit of a one-way circuit, shown in \autoref{fig:oneway}\textbf{(i)}.
The entrance of these one-way circuits is connected to the exit of another one-way circuit.
Lastly, the entrance of the last one-way circuit is connected to the root node $R$ of $T_1$ through a toll subtree, which is a sequence of length $m^2$ serving as a barrier to increase the cost of moving a subtree from one side of the tree to the other. 
The tree $T_2$ has a structure similar to $T_1$.
It differs in the position of the leaves of each sequence in all long sequences, where the leaves in $T_2$ are in the desired position, and in the construction of the one-way circuits, where the circuits have the form shown in~\autoref{fig:oneway}\textbf{(ii)}.

\begin{figure}[ht]
    \begin{center}
        \scalebox{.8}{\scalebox{0.7}{
\begin{tikzpicture}

    \draw (0,0) -- (5,0);
    
    \draw (0,0) -- (0,-0.5);
    \draw (0,-0.5) -- (-1,-1); 

    \draw (0,-0.5) -- (-1,-0.5);

    \draw (0,0) -- (0,1.5);
    \draw (0,0) -- (-0.5,0);
    \draw (-0.5,0) -- (-1, -0.2); 
    \draw (-0.5,0) -- (-1, 0.2); 

    \draw[dotted, ultra thick] (-0.5, 0.5) -- (-0.5,1);

    \draw (0,1.5) -- (-0.5,1.5);
    \draw (-0.5,1.5) -- (-1, 1.3); 
    \draw (-0.5,1.5) -- (-1, 1.7); 
    
    \draw (1,0) -- (1,-0.5);
    \draw (1.5,0) -- (1.5,-0.5);
    \draw (2,0) -- (2,-0.5);

    \draw[dotted, ultra thick] (2.5,-0.3) -- (3.5,-0.3);

    \draw (4,0) -- (4,-0.5);

    \draw (5,0) -- (5,1.5);
    \draw (5,0) -- (5.5,0);
    \draw (5.5,0) -- (6, -0.2); 
    \draw (5.5,0) -- (6, 0.2); 

    \draw[dotted, ultra thick] (5.5, 0.5) -- (5.5,1);

    \draw (5,1.5) -- (5.5,1.5);
    \draw (5.5,1.5) -- (6, 1.2); 
    \draw (5.5,1.5) -- (6, 1.7); 

    \draw (5,0) -- (5,-0.5);
    \draw (5,-0.5) -- (6,-1); 

    \draw (5,-0.5) -- (6,-0.5);

    \node at (-1.4,-1) {\small $a$};

    \node at (-1.4,-0.5) {\small $u_l$};
    
    \node at (-1.4, -0.2) {\small $u_1$};
    \node at (-1.4, 0.2) {\small $u_2$};

    \node at (-1.4, 1.3) {\small $u_{l-2}$};
    \node at (-1.4, 1.7) {\small $u_{l-1}$};

    \node at (1,-0.8) {\small $z_1$};
    \node at (1.5,-0.8) {\small $z_2$};
    \node at (4,-0.8) {\small $z_l$};

    \node at (6.4, -0.2) {\small $v_l$};
    \node at (6.4, 0.2) {\small $v_{l-1}$};

    \node at (6.4, 1.3) {\small $v_3$};
    \node at (6.4, 1.7) {\small $v_2$};

    \node at (6.4, -0.5) {\small $v_1$};

    \node at (6.4,-1) {\small $b$};

    \node at (2.4,-2) {\large $(i)$};

    \node[shape=circle,draw=black,thick, fill=black, scale=0.6] at (0,0) {};

    \node[shape=circle,draw=black,thick, fill=black, scale=0.6] at (0,-0.5) {};

    \node[shape=circle,draw=black,thick, fill=black, scale=0.6] at (0,1.5) {};

    \node[shape=circle,draw=black,thick, fill=black, scale=0.6] at (-0.5,0) {};

    \node[shape=circle,draw=black,thick, fill=black, scale=0.6] at (-0.5,1.5) {};

    \node[shape=circle,draw=black,thick, fill=black, scale=0.6] at (1,0) {};

    \node[shape=circle,draw=black,thick, fill=black, scale=0.6] at (1.5,0) {};

    \node[shape=circle,draw=black,thick, fill=black, scale=0.6] at (2,0) {};

    \node[shape=circle,draw=black,thick, fill=black, scale=0.6] at (4,0) {};

    \node[shape=circle,draw=black,thick, fill=black, scale=0.6] at (5,0) {};

    \node[shape=circle,draw=black,thick, fill=black, scale=0.6] at (5,1.5) {};

    \node[shape=circle,draw=black,thick, fill=black, scale=0.6] at (5.5,0) {};

    \node[shape=circle,draw=black,thick, fill=black, scale=0.6] at (5,-0.5) {};

\end{tikzpicture}
}
\scalebox{0.7}{
\begin{tikzpicture}

    \draw (0,0) -- (4.5,0);
    
    \draw (0,0) -- (-1,-1); 

    \draw (0,0) -- (0,-1.5);
    \draw (0,-0.5) -- (0.5,-0.5);
    \draw (0,-1.0) -- (0.5,-1.0);

    \draw (1.5,0) -- (1.5,-1.5);
    \draw (1.5,-0.5) -- (2,-0.5);
    \draw (1.5,-1.0) -- (2,-1.0);

    \draw[dotted, ultra thick] (2.5,-0.8) -- (3.5,-0.8);

    \draw (4.5,0) -- (4.5,-1.5);
    \draw (4.5,-0.5) -- (4,-0.5);
    \draw (4.5,-1.0) -- (4,-1.0);

    \draw (4.5,0) -- (5.5,-1); 

    \node at (-1.4,-1) {\small $a$};

    \node at (0,-1.8) {\small $z_1$};
    \node at (0.8,-1.0) {\small $u_1$};
    \node at (0.8,-0.5) {\small $v_1$};

    \node at (1.5,-1.8) {\small $z_2$};
    \node at (2.3,-1.0) {\small $u_2$};
    \node at (2.3,-0.5) {\small $v_2$};

    \node at (4.5,-1.8) {\small $z_l$};
    \node at (3.8,-1.0) {\small $u_l$};
    \node at (3.8,-0.5) {\small $v_l$};

    \node at (5.9,-1) {\small $b$};
    \node at (2.4,-2) {\large $(ii)$};

    \node[shape=circle,draw=black,thick, fill=black, scale=0.6] at (0,0) {};
    
    \node[shape=circle,draw=black,thick, fill=black, scale=0.6] at (0,-0.5) {};

    \node[shape=circle,draw=black,thick, fill=black, scale=0.6] at (0,-1) {};

    \node[shape=circle,draw=black,thick, fill=black, scale=0.6] at (1.5,0) {};
    
    \node[shape=circle,draw=black,thick, fill=black, scale=0.6] at (1.5,-0.5) {};

    \node[shape=circle,draw=black,thick, fill=black, scale=0.6] at (1.5,-1) {};

    \node[shape=circle,draw=black,thick, fill=black, scale=0.6] at (4.5,0) {};
    
    \node[shape=circle,draw=black,thick, fill=black, scale=0.6] at (4.5,-0.5) {};

    \node[shape=circle,draw=black,thick, fill=black, scale=0.6] at (4.5,-1) {};

\end{tikzpicture}
}}
    \end{center}
    \caption{
    Two trees representing a one-way circuit.
    The nodes listed as $a$ and $b$ represent any other subtree connection points to other parts of the tree.
    \textbf{(i)} shows the one-way circuit in $T_1$ while \textbf{(ii)} shows the one-way circuit in $T_2$.
    }\label{fig:oneway}
\end{figure}

The process of transforming $T_1$ into $T_2$ is described as follows:
For each $C_i=\{s_{i_1},s_{i_2},s_{i_3}\}$ in the cover, move the corresponding long sequences $S_{i_1},S_{i_2},S_{i_3}$, on the left side of the tree, together through the toll subtree and a one-way circuit.
Next, separate the three long sequences and move each one through another one-way circuit to merge with its corresponding long sequence $S_{i,i_1},S_{i,i_2},S_{i,i_3}$, which represents the subset that is part of the solution.
Let $S_a$ and $S_b$ be two long sequences, {\it merging} $S_a$ and $S_b$ means  moving all sequences in $S_a$ along with their corresponding sequences in $S_b$, 
so that each move simultaneously relocates a leaf in $S_a$ and its counterpart in $S_b$. 
The process of merging will be described in detail later. 
After merging, we sort the long sequences $S_{i_1},S_{i_2},S_{i_3}$ and their counterparts $S_{i_1},S_{i_2},S_{i_3}$ together.
This step sorts two long sequences using the same number of SBM operations required to sort one long sequence. 
Then, we split the merged long sequences $S_{i_1},S_{i_2},S_{i_3}$.
Lastly, each of these long sequences moves through two one-way circuits and the toll subtree.
For the first one-way circuit, each sequence moves individually, while for the second one-way circuit and the toll subtree, they move together.
This results in $S_{i_1},S_{i_2},S_{i_3}$ and its counterpart $S_{i_1},S_{i_2},S_{i_3}$ all sorted and in their correct positions.

As mentioned before, a long sequence $S_i$ and its counterpart $S_{i,i_j}$, for $j \in \{1,2,3\}$ with identical ordering will be brought together to be sorted.
$S_i$ represents an integer and $S_{i,i_j}$ an element from a subset $C_i$ that belongs to the solution of Exact Cover by $3$-Sets. 
For simplicity, the long sequence $S_{i,i_j}$ will be called $S'_i$ from now on.
We now consider the cost of combining the sorting processes of $S_i$ and $S'_i$.

Assume that $S_i$ has already been moved to the exit of the one-way circuit connected to $S'_i$.
The next step involves merging all sequences of $x$ and $y$ in $S_i$ with their corresponding sequences $x$ and $y$ in $S'_i$.
The process of moving $S_i$ to the exit of a one-way circuit will be explained later. 
For the first sequence (illustrated as $\tt{1,1}$ in \autoref{fig:merg-sequence}), we only need one SBM.
The remaining sequences need to move through the noncoding region and one more move to merge the sequences.
\autoref{fig:merg-sequence} shows how the long sequences $S'_i$ and $S_i$ interact with each other in order to merge all sequences. 
Each noncoding region has $k^2$ leaves. 
Each long sequence contains $m^3\log m$ sequences, separated by noncoding regions. 
Hence, this step takes $(k^2 + 1)(m^3 \log m - 2)$
SBM operations and results in all sequences of $S'_i$ to be merged with the corresponding sequence in $S_i$.

\begin{figure}[ht]
    \begin{center}
    \resizebox{1.02\linewidth}{!}{
        \scalebox{0.4}{\begin{tikzpicture}

    \draw (0,0) -- (1,-1);
    \draw (0,0) -- (-1,-1);

    \node[shape=circle,draw=black,thick, fill=white, scale=1] at (0,0) {$R$};

\foreach \x in {1,2,...,4}{
    \draw (\x,-\x) -- (\x+1, -\x-1);
    \draw (\x,-\x) -- (\x+1,-\x+1);
    \node[shape=circle,draw=black,thick, fill=black, scale=0.3] at (\x,-\x) {};

    \node at (\x+1,-\x+1.2) {$1,\x$};

    \draw (\x+0.4,-\x-0.4) -- (\x+0.8,-\x);
    \draw (\x+0.6,-\x-0.6) -- (\x+1,-\x-0.2);
    \node[shape=circle,draw=black,thick, fill=black, scale=0.3] at (\x+0.4,-\x-0.4) {};
    \node[shape=circle,draw=black,thick, fill=black, scale=0.3] at (\x+0.6,-\x-0.6) {};

}
    \draw (5,-5) -- (6,-4);
    \node at (6,-3.8) {$1,5$};
    
    \draw (5,-5) -- (6,-6);
    \node at (6,-5.8) {$1,6$};

    \node[shape=circle,draw=black,thick, fill=black, scale=0.3] at (5,-5) {};

\begin{scope}[rotate =-90]
    \foreach \x in {1,2,...,4}{
    \draw (\x,-\x) -- (\x+1, -\x-1);
    \draw (\x,-\x) -- (\x+1,-\x+1);
    \node[shape=circle,draw=black,thick, fill=black, scale=0.3] at (\x,-\x) {};

    \node at (\x+1,-\x+1.2) {$\x$};

    \draw (\x+0.4,-\x-0.4) -- (\x+0.8,-\x);
    \draw (\x+0.6,-\x-0.6) -- (\x+1,-\x-0.2);
    \node[shape=circle,draw=black,thick, fill=black, scale=0.3] at (\x+0.4,-\x-0.4) {};
    \node[shape=circle,draw=black,thick, fill=black, scale=0.3] at (\x+0.6,-\x-0.6) {};

}
    \draw (5,-5) -- (6,-4);
    \node at (6,-3.8) {$5$};
    
    \draw (5,-5) -- (6,-6);
    \node at (6,-5.8) {$6$};

    \node[shape=circle,draw=black,thick, fill=black, scale=0.3] at (5,-5) {};
\end{scope}

    \node at (0,-6) {$(i)$};

\end{tikzpicture}}
\scalebox{0.4}{\begin{tikzpicture}

    \draw (0,0) -- (-1,-1);

    \node[shape=circle,draw=black,thick, fill=white, scale=1] at (0,0) {$R$};
    
\foreach \x in {1,2,...,4}{
    \draw (-\x,-1) -- (-\x-1, -1);
    \draw (-\x,-1) -- (-\x,0);
    \node[shape=circle,draw=black,thick, fill=black, scale=0.3] at (-\x,-1) {};

    \node at (-\x,0.2) {$1,\x$};

    \draw (-\x-0.4,-1) -- (-\x-0.4,-0.4);
    \draw (-\x-0.6,-1) -- (-\x-0.6,-0.4);
    \node[shape=circle,draw=black,thick, fill=black, scale=0.3] at (-\x-0.4,-1) {};
    \node[shape=circle,draw=black,thick, fill=black, scale=0.3] at (-\x-0.6,-1) {};

}
    \draw (-5,-1) -- (-5,0);
    \node at (-5,0.2) {$1,5$};
    
    \draw (-5,-1) -- (-6,-1);
    \node at (-6,-1.2) {$1,6$};

    \node[shape=circle,draw=black,thick, fill=black, scale=0.3] at (-5,-1) {};

\begin{scope}[rotate =-90]
    \foreach \x in {1,2,...,4}{
    \draw (\x,-\x) -- (\x+1, -\x-1);
    \draw (\x,-\x) -- (\x+1,-\x+1);
    \node[shape=circle,draw=black,thick, fill=black, scale=0.3] at (\x,-\x) {};

    \node at (\x+1,-\x+1.2) {$\x$};

    \draw (\x+0.4,-\x-0.4) -- (\x+0.8,-\x);
    \draw (\x+0.6,-\x-0.6) -- (\x+1,-\x-0.2);
    \node[shape=circle,draw=black,thick, fill=black, scale=0.3] at (\x+0.4,-\x-0.4) {};
    \node[shape=circle,draw=black,thick, fill=black, scale=0.3] at (\x+0.6,-\x-0.6) {};

}
    \draw (5,-5) -- (6,-4);
    \node at (6,-3.8) {$5$};
    
    \draw (5,-5) -- (6,-6);
    \node at (6,-5.8) {$6$};

    \node[shape=circle,draw=black,thick, fill=black, scale=0.3] at (5,-5) {};
\end{scope}

    \node at (-3,-6) {$(ii)$};

\end{tikzpicture}}
\scalebox{0.4}{\begin{tikzpicture}

    \draw (0,0) -- (-1,-1);

    \node[shape=circle,draw=black,thick, fill=white, scale=1] at (0,0) {$R$};
    
\foreach \x in {2,3,...,4}{
    \draw (-\x,-2) -- (-\x-1, -2);
    \draw (-\x,-2) -- (-\x,-1);
    \node[shape=circle,draw=black,thick, fill=black, scale=0.3] at (-\x,-2) {};

    \node at (-\x,-0.8) {$1,\x$};

    \draw (-\x-0.4,-2) -- (-\x-0.4,-1.4);
    \draw (-\x-0.6,-2) -- (-\x-0.6,-1.4);
    \node[shape=circle,draw=black,thick, fill=black, scale=0.3] at (-\x-0.4,-2) {};
    \node[shape=circle,draw=black,thick, fill=black, scale=0.3] at (-\x-0.6,-2) {};

}

    \draw (-1,-1) -- (-1,0);
    \node at (-1,0.2) {$1,1$};

    \draw (-1.4,-1.4) -- (-1.4,-1);
    \draw (-1.6,-1.6) -- (-1.6,-1);
    
    \draw (-5,-2) -- (-5,-1);
    \node at (-5,-0.8) {$1,5$};
    
    \draw (-5,-2) -- (-6,-2);
    \node at (-6,-2.2) {$1,6$};

    \node[shape=circle,draw=black,thick, fill=black, scale=0.3] at (-5,-2) {};

\begin{scope}[rotate =-90]
    \foreach \x in {1,2,...,4}{
    \draw (\x,-\x) -- (\x+1, -\x-1);
    \draw (\x,-\x) -- (\x+1,-\x+1);
    \node[shape=circle,draw=black,thick, fill=black, scale=0.3] at (\x,-\x) {};

    \node at (\x+1,-\x+1.2) {$\x$};

    \draw (\x+0.4,-\x-0.4) -- (\x+0.8,-\x);
    \draw (\x+0.6,-\x-0.6) -- (\x+1,-\x-0.2);
    \node[shape=circle,draw=black,thick, fill=black, scale=0.3] at (\x+0.4,-\x-0.4) {};
    \node[shape=circle,draw=black,thick, fill=black, scale=0.3] at (\x+0.6,-\x-0.6) {};

}
    \draw (5,-5) -- (6,-4);
    \node at (6,-3.8) {$5$};
    
    \draw (5,-5) -- (6,-6);
    \node at (6,-5.8) {$6$};

    \node[shape=circle,draw=black,thick, fill=black, scale=0.3] at (5,-5) {};
\end{scope}

    \node at (-3,-6) {$(iii)$};

\end{tikzpicture}}
\scalebox{0.4}{
\begin{tikzpicture}

    \draw (0,0) -- (-1,-1);

    \node[shape=circle,draw=black,thick, fill=white, scale=1] at (0,0) {$R$};
    
\foreach \x in {1,2,...,4}{
    \draw (-\x,-\x) -- (-\x, -\x+1);

    \node at (-\x,-\x+1.2) {$1,\x$};

    \draw (-\x-0.4,-\x-0.4) -- (-\x-0.4,-\x+0.2);
    \draw (-\x-0.6,-\x-0.6) -- (-\x-0.6,-\x);

}

    \draw (-5,-5) -- (-5,-4);
    \node at (-5,-3.8) {$1,5$};
    
    \draw (-5,-5) -- (-6,-5);
    \node at (-6,-4.8) {$1,6$};

\begin{scope}[rotate =-90]
    \foreach \x in {1,2,...,4}{
    \draw (\x,-\x) -- (\x+1, -\x-1);
    \draw (\x,-\x) -- (\x+1,-\x+1);
    \node[shape=circle,draw=black,thick, fill=black, scale=0.3] at (\x,-\x) {};

    \node at (\x+1,-\x+1.2) {$\x$};

    \draw (\x+0.4,-\x-0.4) -- (\x+0.8,-\x);
    \draw (\x+0.6,-\x-0.6) -- (\x+1,-\x-0.2);
    \node[shape=circle,draw=black,thick, fill=black, scale=0.3] at (\x+0.4,-\x-0.4) {};
    \node[shape=circle,draw=black,thick, fill=black, scale=0.3] at (\x+0.6,-\x-0.6) {};

}
    \draw (5,-5) -- (6,-4);
    \node at (6,-3.8) {$5$};
    
    \draw (5,-5) -- (6,-6);
    \node at (6,-5.8) {$6$};

    \node[shape=circle,draw=black,thick, fill=black, scale=0.3] at (5,-5) {};
\end{scope}

    \node at (-3,-6) {$(iv)$};

\end{tikzpicture}}
    }
    \end{center}
    \caption{
    Merging two long sequences.
    Each label represents a different sequence, where the sequence $\tt{1,i}$ has the same ordering for the leaves as the sequence $\tt{i}$ for $\tt{i} \in \{1,\cdots,6\}$.
    The unlabeled nodes represent the noncoding region.
    \textbf{(i)} shows the starting position, with both $S_i$ and its counterpart $S'_i$ connected to $R$.
    \textbf{(ii)} presents the first step, where the sequences $1$ and $\tt{1,1}$ are merged, with one SBM.
    \textbf{(iii)} shows the sequence $S'_i$ after $k^2$ SBM and passing through the first noncoding region of $S_i$.
    \textbf{(iv)} shows the two long sequences merged.
    }\label{fig:merg-sequence}
\end{figure}

Similarly, for the second step, for each sequence of $S'_i$ we need one SBM to move each leaf to join its partner in $S_i$.
For reference, in \autoref{fig:merg-sequence}, each sequence $\tt{1,i}$ needs to merge its leaves with the leaves from the sequence $i$ for $i \in \{1,\cdots,6\}$.
Each sequence has $k$ leaves and there are $m^3 \log m$ sequences in each long sequence.
Hence, this step takes $km^3 \log m$ SBM operations.

Lastly, since all sequences of $S_i$ and $S'_i$ are merged, sorting $S_i$ also sorts $S'_i$ at the same time.
Each sequence needs at most $k\log k$ SBM to sort, and each long sequence has $m^3 \log m$ sequences.
Hence, this step takes $(k \log k) (m^3 \log m)$ SBM operations.

After the steps above, the long sequences $S_i$ and $S'_i$ are merged, and both have their leaves in the correct order.
Now, we need to begin the process of splitting these long sequences to move $S_i$ to its correct position.
Similarly to merging, splitting these long sequences requires the same number of SBM operations as in the first and second steps of merging.
This occurs because the processes of splitting and merging are the same, with only the order of the moves reversed.

Both processes of splitting and merging occur when $S_i$ and $S'_i$ are at the end of the one-way circuit, the original position of $S'_i$.
For this, we need to move $S_i$ from being a child of $R$ through the toll subtree and the one-way circuit.
As explained before, the long sequences $S_i$ move in groups of $3$ through the first one-way circuit whereas each $S_i$ moves individually through the second one-way circuit. 

Next, we need to consider the number of moves needed to transform the \emph{one-way circuit} in \autoref{fig:oneway}\textbf{(i)} into the one in \autoref{fig:oneway}\textbf{(ii)}.
The one-way circuit is designed to provide a way for one subtree to move from the \emph{entrance}, labeled as $a$, to the \emph{exit}, labeled as $b$, while also sorting the one-way circuit without any additional moves.

An optimal transformation of the circuit \textbf{(i)} to \textbf{(ii)} involves first pairing the leaves marked by $u$ with the leaves marked by $z$ and then pairing the leaves marked by $v$ with the pair $u-z$.
If a subtree is moved from the exit to the entrance and then back to the exit while the moves required to transform the circuit from Figure~5(i) into Figure~5(ii) are also performed, then the leaf $v$ would first be paired with $z$, and the leaf $u$ would then be paired with $v-z$. 
This requires more moves, since the resulting order would still have to be changed from $v-u-z$ to $u-v-z$.

The subtree attached at $a$ and the leaf $u_l$ are initially paired and can move together. 
The leaf $u_l$ needs $\frac{l-3}{2}$ SBM to be one move after the pair $u_{l-1},u_{l-2}$, since it passes through $2$ leaves in one move and ends with the leaves $u_{l-1}$ and $u_{l-2}$.
The leaf $u_l$ itself does not need to be counted. 
The first step for this process is shown in \autoref{fig:oneway-move}\textbf{(i)} while the last is shown in \textbf{(ii)}. 
Next, each pair $u_i,u_{i+1}$ needs $10$ moves to organize the edges and transform them into the correct position.
This occurs for $i \in \{1,3,\cdots,l-1\}.$
The movement for the pair $u_1,u_2$ is shown in \autoref{fig:oneway-move}\textbf{(iii)}. 
Finally, it needs one more step to 
pair the leaves $u_l$ and $z_l$. 
Similarly, when moving from $b$ to $a$ the same process occurs, moving $v_1$ up with the subtree in $b$ and pairing the leaves $v$ with their corresponding leaves $u-z$.
There are $\frac{l-1}{2}$ pairs of leaves that need to be paired up.
Hence, transforming the one-way circuit in \autoref{fig:oneway}\textbf{(i)} into \textbf{(ii)} takes $2(\frac{l-3}{2} + 10\frac{l-1}{2} + 1) = l-3 +10l -10 + 2 = 11l - 11$ SBM operations.


\begin{figure}[ht]
    \begin{center}
        \resizebox{1.02\linewidth}{!}{
        \scalebox{0.5}{
\begin{tikzpicture}

    \draw (0,0) -- (5,0);

    \draw (0,0.8) -- (0.5,0.8);
    \draw (0.5,0.8) -- (0.5,1.3); 

    \draw (0.5,0.8) -- (1,0.8); 
    
    \draw (0,0) -- (0,2);
    \draw (0,0) -- (-0.5,0);
    \draw (-0.5,0) -- (-1, -0.2); 
    \draw (-0.5,0) -- (-1, 0.2); 

    \draw (0,0.8) -- (-0.5,0.8);
    \draw (-0.5,0.8) -- (-1, 0.6);
    \draw (-0.5,0.8) -- (-1, 1);

    \draw[dotted, ultra thick] (-0.5, 1.2) -- (-0.5,1.7);

    \draw (0,2) -- (-0.5,2);
    \draw (-0.5,2) -- (-1, 1.8); 
    \draw (-0.5,2) -- (-1, 2.2); 
    
    \draw (1,0) -- (1,-0.5);
    \draw (1.5,0) -- (1.5,-0.5);
    \draw (2,0) -- (2,-0.5);

    \draw[dotted, ultra thick] (2.5,-0.3) -- (3.5,-0.3);

    \draw (4,0) -- (4,-0.5);

    \draw (5,0) -- (5,1.5);
    \draw (5,0) -- (5.5,0);
    \draw (5.5,0) -- (6, -0.2); 
    \draw (5.5,0) -- (6, 0.2); 

    \draw[dotted, ultra thick] (5.5, 0.5) -- (5.5,1);

    \draw (5,1.5) -- (5.5,1.5);
    \draw (5.5,1.5) -- (6, 1.2); 
    \draw (5.5,1.5) -- (6, 1.7); 

    \draw (5,0) -- (5,-0.5);
    \draw (5,-0.5) -- (6,-1); 

    \draw (5,-0.5) -- (6,-0.5);

    \node at (0.5,1.5) {\small $a$};

    \node at (1.2,0.8) {\small $u_l$};
    
    \node at (-1.4, -0.2) {\small $u_1$};
    \node at (-1.4, 0.2) {\small $u_2$};

    \node at (-1.4, 0.6) {\small $u_3$};
    \node at (-1.4, 1) {\small $u_4$};

    \node at (-1.4, 1.8) {\small $u_{l-2}$};
    \node at (-1.4, 2.2) {\small $u_{l-1}$};

    \node at (1,-0.8) {\small $z_1$};
    \node at (1.5,-0.8) {\small $z_2$};
    \node at (4,-0.8) {\small $z_l$};

    \node at (6.4, -0.2) {\small $v_l$};
    \node at (6.4, 0.2) {\small $v_{l-1}$};

    \node at (6.4, 1.3) {\small $v_3$};
    \node at (6.4, 1.7) {\small $v_2$};

    \node at (6.4, -0.5) {\small $v_1$};

    \node at (6.4,-1) {\small $b$};

    \node at (2.4,-2) {\large $(i)$};

    \node[shape=circle,draw=black,thick, fill=black, scale=0.6] at (0,0) {};

    \node[shape=circle,draw=black,thick, fill=black, scale=0.6] at (0.5,0.8) {};

    \node[shape=circle,draw=black,thick, fill=black, scale=0.6] at (0,0.8) {};

    \node[shape=circle,draw=black,thick, fill=black, scale=0.6] at (0,2) {};

    \node[shape=circle,draw=black,thick, fill=black, scale=0.6] at (-0.5,0) {};

    \node[shape=circle,draw=black,thick, fill=black, scale=0.6] at (-0.5,0.8) {};

    \node[shape=circle,draw=black,thick, fill=black, scale=0.6] at (-0.5,2) {};

    \node[shape=circle,draw=black,thick, fill=black, scale=0.6] at (1,0) {};

    \node[shape=circle,draw=black,thick, fill=black, scale=0.6] at (1.5,0) {};

    \node[shape=circle,draw=black,thick, fill=black, scale=0.6] at (2,0) {};

    \node[shape=circle,draw=black,thick, fill=black, scale=0.6] at (4,0) {};

    \node[shape=circle,draw=black,thick, fill=black, scale=0.6] at (5,0) {};

    \node[shape=circle,draw=black,thick, fill=black, scale=0.6] at (5,1.5) {};

    \node[shape=circle,draw=black,thick, fill=black, scale=0.6] at (5.5,0) {};

    \node[shape=circle,draw=black,thick, fill=black, scale=0.6] at (5,-0.5) {};

\end{tikzpicture}
}
\scalebox{0.5}{
\begin{tikzpicture}

    \draw (0,0) -- (5,0);

    \draw (0,2.5) -- (-0.5,2.5); 
    
    \draw (0,0) -- (0,3);
    \draw (0,0) -- (-0.5,0);
    \draw (-0.5,0) -- (-1, -0.2); 
    \draw (-0.5,0) -- (-1, 0.2); 

    \draw (0,0.8) -- (-0.5,0.8);
    \draw (-0.5,0.8) -- (-1, 0.6);
    \draw (-0.5,0.8) -- (-1, 1);

    \draw[dotted, ultra thick] (-0.5, 1.2) -- (-0.5,1.7);

    \draw (0,2) -- (-0.5,2);
    \draw (-0.5,2) -- (-1, 1.8); 
    \draw (-0.5,2) -- (-1, 2.2); 
    
    \draw (1,0) -- (1,-0.5);
    \draw (1.5,0) -- (1.5,-0.5);
    \draw (2,0) -- (2,-0.5);

    \draw[dotted, ultra thick] (2.5,-0.3) -- (3.5,-0.3);

    \draw (4,0) -- (4,-0.5);

    \draw (5,0) -- (5,1.5);
    \draw (5,0) -- (5.5,0);
    \draw (5.5,0) -- (6, -0.2); 
    \draw (5.5,0) -- (6, 0.2); 

    \draw[dotted, ultra thick] (5.5, 0.5) -- (5.5,1);

    \draw (5,1.5) -- (5.5,1.5);
    \draw (5.5,1.5) -- (6, 1.2); 
    \draw (5.5,1.5) -- (6, 1.7); 

    \draw (5,0) -- (5,-0.5);
    \draw (5,-0.5) -- (6,-1); 

    \draw (5,-0.5) -- (6,-0.5);

    \node at (0,3.2) {\small $a$};

    \node at (-0.7,2.5) {\small $u_l$};
    
    \node at (-1.4, -0.2) {\small $u_1$};
    \node at (-1.4, 0.2) {\small $u_2$};

    \node at (-1.4, 0.6) {\small $u_3$};
    \node at (-1.4, 1) {\small $u_4$};

    \node at (-1.4, 1.8) {\small $u_{l-2}$};
    \node at (-1.4, 2.2) {\small $u_{l-1}$};

    \node at (1,-0.8) {\small $z_1$};
    \node at (1.5,-0.8) {\small $z_2$};
    \node at (4,-0.8) {\small $z_l$};

    \node at (6.4, -0.2) {\small $v_l$};
    \node at (6.4, 0.2) {\small $v_{l-1}$};

    \node at (6.4, 1.3) {\small $v_3$};
    \node at (6.4, 1.7) {\small $v_2$};

    \node at (6.4, -0.5) {\small $v_1$};

    \node at (6.4,-1) {\small $b$};

    \node at (2.4,-2) {\large $(ii)$};

    \node[shape=circle,draw=black,thick, fill=black, scale=0.6] at (0,0) {};

    \node[shape=circle,draw=black,thick, fill=black, scale=0.6] at (0,2.5) {};

    \node[shape=circle,draw=black,thick, fill=black, scale=0.6] at (0,0.8) {};

    \node[shape=circle,draw=black,thick, fill=black, scale=0.6] at (0,2) {};

    \node[shape=circle,draw=black,thick, fill=black, scale=0.6] at (-0.5,0) {};

    \node[shape=circle,draw=black,thick, fill=black, scale=0.6] at (-0.5,0.8) {};

    \node[shape=circle,draw=black,thick, fill=black, scale=0.6] at (-0.5,2) {};

    \node[shape=circle,draw=black,thick, fill=black, scale=0.6] at (1,0) {};

    \node[shape=circle,draw=black,thick, fill=black, scale=0.6] at (1.5,0) {};

    \node[shape=circle,draw=black,thick, fill=black, scale=0.6] at (2,0) {};

    \node[shape=circle,draw=black,thick, fill=black, scale=0.6] at (4,0) {};

    \node[shape=circle,draw=black,thick, fill=black, scale=0.6] at (5,0) {};

    \node[shape=circle,draw=black,thick, fill=black, scale=0.6] at (5,1.5) {};

    \node[shape=circle,draw=black,thick, fill=black, scale=0.6] at (5.5,0) {};

    \node[shape=circle,draw=black,thick, fill=black, scale=0.6] at (5,-0.5) {};

\end{tikzpicture}
}
\scalebox{0.5}{
\begin{tikzpicture}

    \draw (0,0) -- (5,0);

    \draw (0,2.5) -- (-0.5,2.5); 
    
    \draw (0,0.8) -- (0,3);

    \draw (0,0.8) -- (2,0.8) -- (2,0);

    \draw (0,0.8) -- (-0.5,0.8);
    \draw (-0.5,0.8) -- (-1, 0.6);
    \draw (-0.5,0.8) -- (-1, 1);

    \draw[dotted, ultra thick] (-0.5, 1.2) -- (-0.5,1.7);

    \draw (0,2) -- (-0.5,2);
    \draw (-0.5,2) -- (-1, 1.8); 
    \draw (-0.5,2) -- (-1, 2.2); 
    
    \draw (0,0) -- (0,-1);
    \draw (0,-0.5) -- (0.5,-0.5);
    
    \draw (1,0) -- (1,-1);
    \draw (1,-0.5) -- (1.5,-0.5);
    
    \draw (2,0) -- (2,-0.5);

    \draw[dotted, ultra thick] (2.5,-0.3) -- (3.5,-0.3);

    \draw (4,0) -- (4,-0.5);

    \draw (5,0) -- (5,1.5);
    \draw (5,0) -- (5.5,0);
    \draw (5.5,0) -- (6, -0.2); 
    \draw (5.5,0) -- (6, 0.2); 

    \draw[dotted, ultra thick] (5.5, 0.5) -- (5.5,1);

    \draw (5,1.5) -- (5.5,1.5);
    \draw (5.5,1.5) -- (6, 1.2); 
    \draw (5.5,1.5) -- (6, 1.7); 

    \draw (5,0) -- (5,-0.5);
    \draw (5,-0.5) -- (6,-1); 

    \draw (5,-0.5) -- (6,-0.5);

    \node at (0,3.2) {\small $a$};

    \node at (-0.7,2.5) {\small $u_l$};
    
    \node at (0.7, -0.5) {\small $u_1$};
    \node at (1.7, -0.5) {\small $u_2$};

    \node at (-1.4, 0.6) {\small $u_3$};
    \node at (-1.4, 1) {\small $u_4$};

    \node at (-1.4, 1.8) {\small $u_{l-2}$};
    \node at (-1.4, 2.2) {\small $u_{l-1}$};

    \node at (0,-1.2) {\small $z_1$};
    \node at (1,-1.2) {\small $z_2$};
    \node at (2,-0.8) {\small $z_3$};
    \node at (4,-0.8) {\small $z_l$};

    \node at (6.4, -0.2) {\small $v_l$};
    \node at (6.4, 0.2) {\small $v_{l-1}$};

    \node at (6.4, 1.3) {\small $v_3$};
    \node at (6.4, 1.7) {\small $v_2$};

    \node at (6.4, -0.5) {\small $v_1$};

    \node at (6.4,-1) {\small $b$};

    \node at (2.4,-2) {\large $(iii)$};

    \node[shape=circle,draw=black,thick, fill=black, scale=0.6] at (0,0) {};

    \node[shape=circle,draw=black,thick, fill=black, scale=0.6] at (0,2.5) {};

    \node[shape=circle,draw=black,thick, fill=black, scale=0.6] at (0,0.8) {};

    \node[shape=circle,draw=black,thick, fill=black, scale=0.6] at (0,2) {};

    \node[shape=circle,draw=black,thick, fill=black, scale=0.6] at (0,-0.5) {};

    \node[shape=circle,draw=black,thick, fill=black, scale=0.6] at (-0.5,0.8) {};

    \node[shape=circle,draw=black,thick, fill=black, scale=0.6] at (-0.5,2) {};

    \node[shape=circle,draw=black,thick, fill=black, scale=0.6] at (1,0) {};

    \node[shape=circle,draw=black,thick, fill=black, scale=0.6] at (1,-0.5) {};

    \node[shape=circle,draw=black,thick, fill=black, scale=0.6] at (2,0) {};

    \node[shape=circle,draw=black,thick, fill=black, scale=0.6] at (4,0) {};

    \node[shape=circle,draw=black,thick, fill=black, scale=0.6] at (5,0) {};

    \node[shape=circle,draw=black,thick, fill=black, scale=0.6] at (5,1.5) {};

    \node[shape=circle,draw=black,thick, fill=black, scale=0.6] at (5.5,0) {};

    \node[shape=circle,draw=black,thick, fill=black, scale=0.6] at (5,-0.5) {};

\end{tikzpicture}
}
        }
    \end{center}
    \caption{
    \textbf{(i)} shows the first step to move up the leaf $u_l$ and \textbf{(ii)} shows the last step after $\frac{l-3}{2}$ SBM.
    \textbf{(iii)} shows the leaves $u_1$ and $u_2$ after $10$ SBM and pairing with the leaves $z_1$ and $z_2$, respectively.
    }\label{fig:oneway-move}
\end{figure}

Setting $l=m^4$ makes $l$ large enough that it is not worthwhile to move any long sequence $S_{i,i_j}$ for $i \in \{1,\cdots,n\}$ and $j \in \{1,2,3\}$ through the one-way circuit in order to sort $S_{i,i_j}$.
This movement is not worthwhile because it requires $S_{i,i_j}$ to move from the exit of the one-way circuit to the entrance and back to the exit, and as explained before, this process requires more steps in order to correctly transform each one-way circuit than to move one subtree from the entrance to the exit and back to the entrance.
Moving through the one-way circuit alone, without transforming it into its correct form, still requires $l$ steps, i.e., $m^4$ SBM operations. 
Moreover, only sorting $S_{i,i_j}$ on its own costs at most $(k \log k)( m^3 \log m)$,
since it costs at most $k \log k$ for a sequence and there are $m^3 \log m$ sequences in a long sequence. 

In order to transform $T_1$ into $T_2$, we need to sort the sequences $S_i$ and $S_{i,j}$ and convert each one-way circuit to the structure shown in \autoref{fig:oneway}\textbf{(ii)}.
If the set $S$ has an exact cover $C_{i_1},\cdots,C_{i_q}$ 
then, for each subset $C_j=\{s_{j_1},s_{j_2},s_{j_3}\}$ in the cover, with $j \in \{i_1,\ldots,i_q\}$, 
send the three long sequences $S_{j_1},S_{j_2},S_{j_3}$ on the left of $T_1$ to their counterparts $S_{j,j_1},S_{j,j_2},S_{j,j_3}$ on the right of $T_1$, combine the sorting of each pair and move back the sorted long sequences $S_{j_1},S_{j_2},S_{j_3}$.
This process also transforms each one-way circuit involved into the correct shape.
Lastly, for each subset $C_i=\{s_{i_1},s_{i_2},s_{i_3}\}$, that is not part of the solution, we need to sort the corresponding long sequences $S_{i,i_1},S_{i,i_2},S_{i,i_3}$ and the one-way circuits that are connected to them.
Note that there are $n$ subsets and $q$ subsets that are part of the solution, resulting in $3(n-q)$ long sequences that need to be sorted alone. 

The total cost $Q$ for this process is calculated as the sum of the following steps:

\begin{enumerate}
    \item Converting all one-way circuits to the correct structure: $4n(11l - 11)$ SBM.

    \item Moving a sequence across a one-way circuit has no additional cost.

    \item 
    Moving the $q$ groups of three long sequences corresponding to the EC3S cover across the toll subtrees: $q2m^2$.

    \item 
    Merging and splitting all sequences corresponding to the EC3S cover: $2m(k^2+1)(m^3 \log m - 2)$.

    \item Merging and splitting the leaves of the long sequences used in the step above: $2m(km^3 \log m)$.

    \item Sorting the merged sequences: $m(k \log k)(m^3 \log m)$.

    \item 
    Sorting the $3(n-q)$ long sequences corresponding to subsets not in the EC3S cover: $3(n-q)(k \log k)(m^3 \log m)$.

\end{enumerate}

\begin{lemma}\label{lm:noexactcover}
    If the set $S$ has no exact cover, then $D_{SBM}(T_1,T_2) \geq Q +\frac{m^2}{2}$
\end{lemma}

\begin{proof}

If the set $S$ has no exact cover, then in order to transform $T_1$ into $T_2$, we either have to send more than $q$ groups crossing the one-way circuit and the toll subtree, or some sequence in the long sequence $S_{j,i}$ that belongs to the cover is sorted separately from its corresponding sequence in the long sequence~$S_i$.

In the first case, the cost will be increased by $2m^2$ SBM, which is the cost of moving an extra group past the toll subtree of length $m^2$.
In the second case, we introduce the cost of sorting a sequence individually instead of merging it with its counterpart, which increases the cost by adding the cost of one more sorting while removing the cost of merging two sequences $(k \log k) - k$, which is larger than $m^2$ for sufficiently large $k$ and $m$.
~\end{proof}
Hence, \autoref{lm:noexactcover} concludes the proof.~\end{proof}

\section{Relation to the MCAT problem}\label{sec:MCATsectionnew}




Given a tree $T_p$ of a sequence $p$, a \emph{bracket set} $B^p_i$ corresponds to the subtree rooted at an internal node of $T_p$, and is defined by the set of its direct children, which may be either leaves or other bracket sets. 
The collection of all such sets defines $\mathcal{B}^p$. 
For convenience, we list the elements of a bracket set from top to bottom in $T_p$. 
Given two isomorphic trees $T_p$ and $T_q$, where $p$ and $q$ have the same length, to determine an upper bound on $d_{SBM}(T_p,T_q)$, 
we determine a function $f$
that maps $\mathcal{B}^p$ to $\mathcal{B}^q$ 
such that if $B^p_i \neq B^q_j$ and $f(B^p_i) = B^q_j$, then we 
count 
removals (or insertions) of elements in a bracket set by considering the paths of those elements through $T_p$. 
Thus, 
$d_{SBM}(T_p,T_q)$ is bounded by the function $f$ that minimizes the total length of the element paths required to transform $T_p$ into $T_q$. 
We denote by $d_{SBM}(B_i, B_j)$ the SBM distance between the trees defined by the brackets $B_i$ and $B_j$, i.e., $d_{SBM}(B^p_i, B^q_j)=d_{SBM}(T_p,T_q)$.

\begin{lemma}\label{lm:isodist}
Given two isomorphic trees $T_p$ and $T_q$, where $p$ and $q$ have the same length, the distance between $T_p$ and $T_q$ is 
$d_{SBM}(T_p, T_q) \leq \min\limits_{f: \mathcal{B}^p \rightarrow \mathcal{B}^q}\Big\{ \sum\limits_{i,j} d_{SBM}(f(B^p_i), B^q_j) \Big\}$.
\end{lemma}
\begin{proof}
An SBM on a node $v$ in $T_p$ moves the subtree rooted at $v$ one level up or down; this subtree may be a single leaf or an internal node with all its descendants.
If a mapping $f : \mathcal{B}^p \to \mathcal{B}^q$ sends some $B^p_i$ to $B^q_j$ with $B^p_i \neq B^q_j$, then their bracket sets differ.
Thus, transforming $T_p$ into $T_q$ amounts to counting the insertions and removals needed to modify each $B^p_i$ so that, under $f$, every matched pair satisfies $B^p_i = B^q_j$. 
~\end{proof}


For example, consider 
$p = ((1,(2,(3, 4))),(5, 6))$ and
$q = ((1,(2,(5, 4))),(3, 6))$. 
The bracket set of $p$ is $\mathcal{B}^p = \{B^p_1, B^p_2, B^p_3, B^p_4, B^p_5\}$, for 
$B^p_1 = \{B^p_2, B^p_3\}$, $B^p_2 = \{1, B^p_4\}$, $B^p_3 = \{5, 6\}$, $B^p_4 = \{2, B^p_5\}$, $B^p_5 = \{3,4\}$. 
The bracket set of $q$ is $\mathcal{B}^q = \{B^q_1, B^q_2, B^q_3, B^q_4, B^q_5\}$, for 
$B^q_1 = \{B^q_2, B^q_3\}$, $B^q_2 = \{1, B^q_4\}$, $B^q_3 = \{3, 6\}$, $B^q_4 = \{2, B^q_5\}$, $B^q_5 = \{5,4\}$. 
Consider $f(B^p_i) = B^q_i$, for $i = 1, \cdots, 5$. 
Since $B^p_3 \neq B^q_3$ and $B^p_5 \neq B^q_5$, 
we define paths that move element~$3$ from $B^p_5$ to $B^p_3$ and element $5$ from $B^p_3$ to $B^p_5$. 
Those paths are simulated in $T_p$ by first moving element $3$ 
from $B^p_5$ to $B^p_4$, and then
from $B^p_4$ to $B^p_2$, and then 
from $B^p_2$ to $B^p_1$, and then 
from $B^p_1$ to $B^p_3$, yielding, at this stage, $((1,(2,(4))),(3, 5, 6))$. 
By applying four more operations to move element $5$ from $B^p_3$ to $B^p_5$, we finally obtain $T_q$. 
Therefore, $d_{SBM}(T_p, T_q) \leq 8$.

Given two isomorphic trees, we first define the mapping between their bracket sets by maximizing the number of common elements in each pair of corresponding sets. 
Once this mapping is established, we describe how to move elements across different branches of the tree. 

\begin{proposition}[$\star$]\label{thm:tree_distance_moves}
Given two isomorphic trees $T_p$ and $T_q$, the subtree–movement distance 
$d_{SBM}(T_p, T_q)$ is at most the minimum number of elementary 
subtree movement operations required to transform $T_p$ into $T_q$. 
Each operation consists of moving all elements belonging to a branch 
$B_1$ to another branch $B_2$, following a bottom–up process over the tree layers.
\end{proposition}

Now, given two arbitrary trees $T_p$ and $T_q$ (not necessarily isomorphic), to determine an upper bound for $d_{SBM}(T_p, T_q)$, 
we first compute a variant of the {\sc maximum common subtree} problem (MCS) for $T_p$ and $T_q$, denoted by $MCS(T_p,T_q)$. 
MCS is a restriction of the general and well-known {\sc maximum common subgraph}, an \NP-complete problem, but MCS is solvable in polynomial time since the input and output graphs are trees~\cite{akutsu1992rnc,lozano2004maximum}. 




Given a node $v$ in a tree $T$, the subtree rooted at $v$ that contains all its descendants is called a \emph{$v$-tree}. 
Note that every $v$-tree is a subtree of $T$, but not every subtree of $T$ rooted at $v$ is a $v$-tree, as it may omit nodes along a path from $v$ to a descendant leaf. 
We next define the notion of an \emph{almost $v$-tree}.

\begin{definition}
    Given a tree $T$, an \emph{almost $v$-tree} is defined as a subtree rooted at $v$ where for each child $u$ of $v$, if $u$ belongs to the subtree then all descendants of $u$ also must belong to the subtree.
    If all children of $v$ belong to the subtree, then it is a $v$-tree.
\end{definition}


\vspace{-.15cm}

For a tree $T$ and a node $v$, an almost $v$-tree is obtained as follows:

\vspace{-.15cm}
\begin{enumerate}
    \item Consider an internal node $v \in T$ with $k$ children $v_1, v_2, \cdots, v_k$. 
    Start with its $v$-tree, and among the $v_i$-trees, for $i = 1, 2, \cdots, k$, remove $j$ of them, for $0 \leq j \leq k$. If $j=0$, then no children were removed; 

    \item The resulting tree is denoted by $H_v^{v_1, \cdots, v_j}$ of $T$, if $v_1, \cdots, v_j$ are the $j$ children of $v$ removed from the $v$-tree of $T$, for $j \in \{1, 2, \cdots, k\}$. 
    Additionally, $H_v^{v_0}$ indicates that no child was removed from the $v$-tree, i.e. $H_v^{v_0}$ is equal to the $v$-tree. 
\end{enumerate}

$H_v^{v_1, \cdots, v_j}$ is an almost $v$-tree of $v$.


We aim to solve the {\sc Maximum Common Almost $v$-tree} problem.

\begin{defproblemaOPT}
{Maximum Common Almost $v$-tree (MCAT)}
{Two trees, $T_p$ and $T_q$, with associated Newick sequences $p$ and $q$, of the same length.}
{Obtain MCAT$(T_p,T_q)$, which is an almost $v$-tree of $T_p$ that is isomorphic to an almost $v'$-tree of $T_q$ with the maximum number of leaves. The almost $v$-tree and almost $v'$-tree should contain the same set of leaves.}
\end{defproblemaOPT}


Since each $x$-tree has labeled leaves, the positions of the leaves in an almost $v$-tree may differ from those in an almost $v'$-tree, even if they share the same set of leaves.

\begin{theorem}[$\star$]\label{thm:MCAT}
MCAT can be solved in polynomial time.
\end{theorem}

After obtaining a solution to the MCAT problem, the complete strategy for computing an upper bound on the SBM distance between two trees is described in \autoref{thm:distance}. 
This greedy strategy runs in polynomial time. For each MCAT, it finds the largest set of elements allowing a subtree movement, thus locally determining the optimal number of moves to transform one tree into another.

\begin{theorem}[$\star$]\label{thm:distance}
Given $T_p$ and $T_q$: 

$d_{SBM}(T_p,T_q) \leq d_{SBM}(T_p[H_v],T_q[H_v']) + d_{SBM}(T_p / H_v,T_q / H_v') + |n(T_p) - n(T_q)|$, 

\noindent where $H$ is the solution of MCAT$(T_p,T_q)$, $n(T_p)$ (respectively, $n(T_q)$) is the number of nodes in $T_p$ ($T_q$) and $T_p/H_v$ ($T_q/H_v'$) is the resulting graph after contracting $H_v$ ($H_v'$) in $T_p$ ($T_q$). 
This upper bound on $d_{SBM}(T_p,T_q)$ can be determined in polynomial time.

\end{theorem}

If $H = T_p$, then $T_p \approx T_q$ (i.e. these trees are isomorphic), and 
the distance formula obtained in \autoref{thm:distance} implies exactly \autoref{lm:isodist}.
Otherwise, the distance between the trees is obtained by the recursive formula in the proof of \autoref{thm:distance}, which can be performed in $O(n^3)$ running time.


\section{Consensus problems}\label{sec:closest}
\vspace{-.3cm}

Regarding the {\sc Median} problem, we show that it remains \NP-complete even when restricted to three input trees, a case we denote by {\sc Tree-Median$_3(T_1,T_2,T_3)$}.
The hardness of {\sc Tree-Median$_3$} (\autoref{thm:TreeMedian3NPC}) follows from a reduction of {\sc Breakpoint-Median$_3$} restricted to instances without consecutive breakpoints; this restricted version is proved \NP-complete in \autoref{thm:breakpointRestriction}, extending the classical result for the general problem~\cite{pe1998median}.
Using this result, we further show in \autoref{thm:Closest3NPC} that the {\sc Closest} problem on three input trees, denoted {\sc Tree-Closest$_3(T'_1,T'_2,T'_3)$}, is also \NP-complete.

A \emph{permutation} is a bijective function $\pi: \Sigma \to \Sigma$, where $\Sigma = \{1, 2, \dots, n\}$. 
The value $n = |\Sigma|$ is called the \emph{length} of the permutation. 
An \emph{adjacency of a permutation $\pi$ with respect to permutation~$\sigma$} is a pair $(\pi[i], \pi[i+1])$ of consecutive elements in $\pi$ such that this pair  
is also consecutive in $\sigma$, i.e., $\pi[i] = \sigma[j]$ and $\pi[i+1]=\sigma[j+1]$. 
If a pair of consecutive elements is not an adjacency, then it is called a \emph{breakpoint}, 
and we denote by $d_{\sf BP}(\pi, \sigma)$ the number of breakpoints of~$\pi$ with respect to~$\sigma$. 
The set $\adj(\pi)$ is the set of \emph{adjacencies of $\pi$}, given by $\adj(\pi) = \{\{\pi[i], \pi[i+1]\} \mid i=1, \ldots, n-1 \}$. 
Thus, in other words, the breakpoint distance between $\pi$ and $\sigma$ is $d_{\sf BP}(\pi, \sigma) = |\adj(\pi) - \adj(\sigma)|$. 

First, we demonstrate that any instance of {\sc Breakpoint-Median$_3$} can be transformed into an equivalent instance that does not contain two consecutive breakpoints.
Since this transformation is computable in polynomial time, it follows that the restricted version without two consecutive breakpoints is also \NP-complete.
\begin{theorem}[$\star$]\label{thm:breakpointRestriction}
{\sc Breakpoint-Median$_3$} is \NP-complete, even for instances that do not contain two consecutive breakpoints. 
\end{theorem}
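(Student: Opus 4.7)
The plan is to show \NP-membership and then reduce from the general {\sc Breakpoint-Median$_3$} problem shown \NP-complete by Pe'er and Shamir~\cite{pe1998median}. Membership is immediate: a candidate median's breakpoint cost against three permutations can be verified in linear time, so I focus on hardness.

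For hardness, the idea is to \emph{inflate} each instance so that consecutive breakpoints become structurally impossible. Given $(\pi_1,\pi_2,\pi_3)$ over $[n]$ with target $k$, I would construct $(\pi'_1,\pi'_2,\pi'_3)$ of length $2n$ by replacing each element $a$ in every $\pi_j$ with the contiguous block $a\,a'$, where $a'$ is a fresh marker specific to $a$. Since $\{a, a'\}$ then appears as an adjacency in all three $\pi'_j$, the within-block positions are never breakpoints, and as they alternate with between-block positions, no two consecutive positions can both be breakpoints in any pairwise comparison. Hence the constructed instance satisfies the restriction.

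Correctness would follow from a cost-preservation bijection. Starting from a median $\sigma$ of the original, its inflation $\sigma'$ keeps all $n$ within-block adjacencies in each $\pi'_j$, so its total breakpoint cost differs from that of $\sigma$ by a fixed additive constant depending only on $n$. Conversely, by a standard exchange argument, any median $\sigma'$ of the constructed instance may be assumed to keep each pair $\{a, a'\}$ contiguous: breaking such a pair loses a triply-shared adjacency and can only increase the cost. From a normalized $\sigma'$ of this form, one reads off a candidate median $\sigma$ of the original with matching structural cost.

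The main obstacle is orientation bookkeeping. Since adjacencies are unordered sets, placing a block as $a\,a'$ or $a'\,a$ changes the between-block neighbours, so the same original adjacency $\{a, b\}$ may appear as $\{a', b\}$ in one $\pi'_j$ but as $\{b', a\}$ in another, spuriously inflating the apparent breakpoint cost. Resolving this will require either a careful case analysis or a slightly richer gadget — for instance, an orientation-forcing triple of the form $a''\,a\,a'$ — so that every original adjacency corresponds canonically to exactly one inflated adjacency. Handling this step cleanly is the delicate point; once it is in place, the reduction is polynomial and the cost equivalence yields the desired \NP-hardness.
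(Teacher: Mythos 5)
Your reduction is, at heart, the same one the paper uses: pad each input permutation with fresh elements so that every other consecutive position carries an adjacency shared by all three inputs (hence never a breakpoint, which kills consecutive breakpoints), invoke the fact that a universally shared adjacency can be assumed to survive into an optimal median in order to normalize candidate solutions, and read off a cost-preserving correspondence (the paper gets $d_{\sf BP}(\pi'_x,\pi'_y)=2\,d_{\sf BP}(\pi_x,\pi_y)$, you get an exact identity; either works). The one substantive difference is how the padding is indexed: the paper inserts a pair of new elements between \emph{positions} $i$ and $i+1$ of every $\pi_x$, whereas you attach a marker $a'$ to the \emph{element} $a$ itself. Your choice is the more robust one: since an element generally occupies different positions in $\pi_x$ and $\pi_y$, padding must travel with the elements for the original and padded adjacencies to line up canonically across permutations, which is exactly what your blocks achieve. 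You also supply the (routine) \NP-membership remark.

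On the obstacle you flag: whether it is real depends on which of the paper's two definitions of adjacency you adopt. Under the ordered definition ($\pi[i]=\sigma[j]$ and $\pi[i+1]=\sigma[j+1]$), the left interface of every block is $a$ and the right interface is $a'$, the pair $(a',b)$ occurs in $\pi'_y$ exactly when $(a,b)$ occurs in $\pi_y$, a reversed block $a'\,a$ in a median is strictly suboptimal, and your two-element block already closes the argument with no extra gadget. Under the unordered, set-based definition via $\adj(\cdot)$, the issue is genuine, but note that your proposed fix $a''\,a\,a'$ does not resolve it by itself: the block still has distinct left and right interface symbols, so $\{a,b\}$ realized as $a$-before-$b$ in one input and $b$-before-$a$ in another still yields non-matching padded pairs, and arguing that a median may freely reverse blocks just reintroduces the bookkeeping. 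I recommend fixing the ordered convention explicitly and finishing with the two-element block; with that settled, your argument is complete and essentially coincides with the paper's.
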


We establish a polynomial-time reduction from {\sc Breakpoint-Median$_3$} on permutations without consecutive breakpoints to {\sc Tree-Median$_3$}. 
We define a transformation in which each breakpoint between two permutations corresponds to a single SBM operation in the cell trees. 
Since this restricted version of {\sc Breakpoint-Median$_3$} is \NP-complete, it follows that {\sc Tree-Median$_3$} is also \NP-complete.

\begin{theorem}[$\star$]\label{thm:TreeMedian3NPC}
{\sc Tree-Median$_3(T_1,T_2,T_3)$} is \NP-complete, even for trees with height at most~$2$.
\end{theorem}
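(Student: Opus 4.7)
I would prove \NP-hardness by a polynomial-time reduction from \textsc{Breakpoint-Median$_3$} restricted to inputs with no two consecutive breakpoints, which is \NP-complete by~\autoref{thm:breakpointRestriction}. Membership in \NP is immediate: a height-$2$ candidate tree has polynomial size, and by~\autoref{thm:h2=swap} each pairwise distance reduces to a swap distance, computable in linear time.

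Given an instance $(\pi_1,\pi_2,\pi_3,k)$ of the restricted problem, I build three height-$2$ trees $T_1,T_2,T_3$ whose Newick sequences encode $\pi_1,\pi_2,\pi_3$ as algebraic cycles of length $n$, as permitted by~\autoref{lm:h2}. Writing $\pi^c$ for the single-cycle permutation whose algebraic cycle coincides with the sequence $\pi$, \autoref{thm:h2=swap} yields
\[
d(T_\sigma,T_i)=d_{\sf swap}(\sigma^c,\pi_i^c)
\]
for every candidate height-$2$ median tree $T_\sigma$, so the resulting \textsc{Tree-Median$_3$} instance is precisely a \textsc{Swap-Median$_3$} instance on $(\pi_1^c,\pi_2^c,\pi_3^c)$.

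The key step is to show that, under the restriction, breakpoints contribute additively to the swap distance on cyclic encodings: there exist constants $\alpha,\beta$ depending only on $n$ such that
\[
d_{\sf swap}(\sigma^c,\pi^c)=\alpha\, d_{\sf BP}(\sigma,\pi)+\beta
\]
whenever $\sigma$ and $\pi$ each satisfy the no-two-consecutive-breakpoints property against the other. Setting $k':=\alpha k+3\beta$ then gives $\sum_i d(T_\sigma,T_i)\leq k'$ if and only if $\sum_i d_{\sf BP}(\sigma,\pi_i)\leq k$, transferring optima in both directions.

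\textbf{Main obstacle.} The affine identity above is the crux. Without the restriction, two close breakpoints can be cancelled by a single swap on the cyclic encoding, so the ratio $d_{\sf swap}/d_{\sf BP}$ is not constant and a naive reduction fails. The no-two-consecutive-breakpoints hypothesis is tailored precisely to prevent these interactions: each breakpoint is insulated by adjacencies on both sides, so it should contribute an independent transposition to the cycle product $\sigma^c(\pi^c)^{-1}$. Proving the identity rigorously requires (i)~a local analysis around each isolated breakpoint showing how it modifies the cycle count $c(\sigma^c,\pi^c)$ by a fixed amount, through the number of shared directed arrows of the two cyclic sequences; and (ii)~arguing that an optimal median $\sigma$ can be assumed without loss of generality to also satisfy the restriction against each input $\pi_i$, so the identity applies throughout the optimization. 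If obtaining the clean affine identity turns out to be delicate, a safe workaround is to pad each $\pi_i$ with a polynomial number of fresh ``marker'' elements inserted between consecutive positions, designed so that every breakpoint of a candidate $\sigma$ is forced to be isolated by the padding and the per-breakpoint contribution to $d_{\sf swap}$ becomes exact by construction.
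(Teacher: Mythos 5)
Your plan hinges on the affine identity $d_{\sf swap}(\sigma^c,\pi^c)=\alpha\, d_{\sf BP}(\sigma,\pi)+\beta$, which you correctly flag as the crux but do not establish --- and it is in fact false. Take $\pi=\iota=[1\,2\,3\,4\,5\,6]$ and $\sigma=[2\,3\,4\,5\,6\,1]$: here $d_{\sf BP}(\sigma,\iota)=1$ (only the pair $(6,1)$ is a breakpoint, so the no-two-consecutive-breakpoints restriction holds), yet $\sigma^c$ and $\iota^c$ are the \emph{same} $n$-cycle $1\to2\to\cdots\to6\to1$, so $d_{\sf swap}(\sigma^c,\iota^c)=0$. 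Together with the pair $(\iota,\iota)$ this forces $\beta=0$ and then $\alpha=0$, so no nontrivial affine relation exists. The underlying problem is structural: $d_{\sf swap}=n-c$, and the number of cycles of $(\pi^c)^{-1}\sigma^c$ formed by the displaced elements depends on the global cycle structure of that composition, not on how many breakpoints there are; isolating breakpoints does not make their contributions to the cycle count additive, so the padding workaround does not repair this. There is also a linear-versus-cyclic mismatch (a wrap-around ``breakpoint'' of the sequences can be an adjacency of the cyclic encodings) that your encoding inherits. Finally, note that your intermediate step identifies the constructed instance with a \textsc{Swap-Median$_3$} instance; the paper explicitly records the complexity of swap median on three permutations as a long-standing open problem, so a reduction that factors through it and then resolves it via a clean identity should be treated with suspicion.

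The paper's proof avoids swap distance and cycle counting entirely. For each ordered pair it builds a tree $T_{x,y}$ of height at most $2$: a star rooted at $v$ in which every breakpoint pair $a\ b$ of $\pi_x$ with respect to $\pi_y$ is hung as a depth-$2$ cherry (an internal child $u$ of $v$ with leaves $a$ and $b$), all remaining elements being direct children of $v$. One tree operation per cherry lifts it into the star $T_y$, and the no-two-consecutive-breakpoints restriction guarantees the cherries are pairwise disjoint, so $d(T_{x,y},T_y)=d_{\sf BP}(\pi_x,\pi_y)$ holds exactly, with no constants to calibrate. If you want to salvage your write-up, replace the algebraic-cycle encoding by this direct ``breakpoint-as-cherry'' encoding; the membership-in-\NP{} argument and the appeal to \autoref{thm:breakpointRestriction} in your proposal can be kept as they are.
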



We can now relate the {\sc Median} and {\sc Closest} problems under tree distances, a connection absent in other genome-rearrangement operations~\cite{cunha2024complexity,cunha2020computational}.

\vspace{-.1cm}

\begin{theorem}[$\star$]\label{thm:Closest3NPC}
{\sc Tree-Closest$_3(T'_1,T'_2,T'_3)$} is \NP-complete.
\end{theorem}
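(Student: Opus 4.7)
The plan is to mirror the template of Theorem~\ref{thm:TreeMedian3NPC}, starting this time from a \emph{closest}-type source problem over permutations. First, I would establish that \textsc{Breakpoint-Closest$_3$} is \NP-complete even when the input permutations admit no two consecutive breakpoints --- a direct analogue of Theorem~\ref{thm:breakpointRestriction}. This step adapts the underlying gadget so that the target objective is the maximum rather than the sum of breakpoint distances; the no-consecutive-breakpoints property should be preserved by the gadget construction, with only the numerical threshold modified.

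Given an instance $(\pi_1, \pi_2, \pi_3)$ of this restricted closest problem, I would apply verbatim the tree construction from the proof of Theorem~\ref{thm:TreeMedian3NPC}: take $T_1$ as the star tree associated with $\pi_1$, and for each ordered pair $(x,y)$ build $T_{x,y}$ by creating a depth-$2$ subtree rooted at a new child of the global root for every breakpoint $a\ b$ of $\pi_x$ with respect to $\pi_y$. Each resulting tree has height at most $2$, matching the structural bound already exploited in the median case.

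The crucial reusable fact from Theorem~\ref{thm:TreeMedian3NPC} is the \emph{pointwise} equality $d_{\sf BP}(\pi_x, \pi_y) = d(T_{x,y}, T_y)$ between the breakpoint and tree distances. Because this equality is pointwise and not merely in aggregate, taking the maximum commutes with the translation: any tree $T^*$ satisfying $\max_i d(T^*, T_i) \leq k$ yields a permutation $\pi^*$ with $\max_i d_{\sf BP}(\pi^*, \pi_i) \leq k$, and conversely. Combined with membership in \NP (one can guess a candidate tree and verify each of the three distances in polynomial time using Theorem~\ref{thm:distance}), this establishes the hardness claim.

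The main obstacle lies in the first step. Theorem~\ref{thm:breakpointRestriction} was tailored to the summation objective, and its reduction does not transfer mechanically to a max objective; a separate hardness argument for \textsc{Breakpoint-Closest$_3$} under the no-consecutive-breakpoints restriction will be required, likely through an independent reduction from a known \NP-complete problem (for instance, by adapting Popov's \textsc{Swap-Closest} construction or a variant of \textsc{Closest String}) while taking care that the structural restriction on consecutive breakpoints is preserved end-to-end in the constructed instance.
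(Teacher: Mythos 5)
There is a genuine gap, and it sits precisely where you locate your ``main obstacle'': your entire reduction rests on first proving that \textsc{Breakpoint-Closest$_3$} (three input permutations, no two consecutive breakpoints) is \NP-complete, and you leave that step entirely open. This is not a technicality that can be deferred. Min-max consensus problems with a \emph{constant} number of inputs behave very differently from their min-sum counterparts: for instance, \textsc{Closest String} under Hamming distance, while \NP-hard for an unbounded number of strings (Lanctot et al.), becomes polynomial-time solvable for any fixed number of strings (the number of column types is constant, so the problem reduces to an integer program in fixed dimension). Popov's hardness for \textsc{Swap-Closest} likewise requires an unbounded number of input permutations. No hardness result for a ``closest'' problem over exactly three permutations is cited in the paper or known to follow from the median case, and the reduction of Theorem~\ref{thm:breakpointRestriction} gives you no leverage on a max objective. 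So the cornerstone of your argument is an unproven --- and quite plausibly false --- claim.

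The paper avoids this trap by reducing in the opposite direction: from \textsc{Tree-Median$_3$} (already shown \NP-complete in Theorem~\ref{thm:TreeMedian3NPC}) to \textsc{Tree-Closest$_3$}. Each of the three original trees $T_x$ is padded with a long path of $2nh_x+1$ pendant-decorated nodes (the bound $d(T_x,T_y)\leq 2nh$ of the Fact guarantees the padding is expensive enough that no optimal solution moves material across parts), and then the three padded trees are concatenated in the three cyclic orders to form $T'_{1,2,3}$, $T'_{2,3,1}$, $T'_{3,1,2}$. Because each composite input contains \emph{all three} original trees, the distance from a candidate $T'_{\delta,\delta,\delta}$ to any one composite input decomposes as the \emph{sum} $d(T'_\delta,T'_1)+d(T'_\delta,T'_2)+d(T'_\delta,T'_3)$; the max over the three composite inputs therefore equals that sum, and minimizing the max is exactly the median objective. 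This concatenation gadget is the idea your proposal is missing: it converts min-sum hardness into min-max hardness without ever needing a closest-type source problem on three permutations. Separately, note that even if your first step were granted, your back direction (``any tree $T^*$ with $\max_i d(T^*,T_i)\leq k$ yields a permutation $\pi^*$'') would still need an argument that an arbitrary optimal tree can be normalized to one of the form $T_{x,y}$, i.e.\ to something encoding a permutation.
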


To address the {\sc Closest} and {\sc Median} problems, two algorithms for analyzing multiple input trees are presented.

\vspace{-.3cm}
\subparagraph{{\sc Generalized MCAT Consensus Construction} algorithm.}

This approach\footnote{described in \url{https://github.com/ThiagoLNascimento/Tree-Distance-UFF-NIteroi-ETH-Zurich/tree/main}} generalizes the {\sc maximum common almost $v$-tree} problem defined in \autoref{sec:MCATsectionnew} by considering a set of $k \geq 2$ input trees. 
The main idea is to compute the MCAT of the set of $k$ input trees. 
This is done by contracting the MCAT solution $H$ in each input tree and simultaneously adding $H$ as part of a candidate solution tree $T^*$ for the considered consensus problem. 
This process is repeated on the remaining trees until the construction is complete. 

A subtree $H'$ obtained in an MCAT step is incorporated into $T^*$ as follows. 
Let $v$ be the most frequent ancestral node of $H'$ among the input trees. 
If $v$ is already in $T^*$, then $H'$ is placed as a descendant of $v$ in $T^*$. 
Otherwise, $H'$ is added to an auxiliary set and incorporated into $T^*$ once $v$ is included. 
Finally, for each subtree $H'$, the leaves of $H'$ in $T^*$ are ordered according to the most frequent ordering among the corresponding input subtrees. 

We now argue that the most frequent ancestral node of $H'$ among the input trees belongs to $T^*$. 
Suppose that $H'$ is an MCAT solution over the input trees, and let $v$ be its most frequent ancestral node. 
Assume, for contradiction, that $v$ does not belong to $T^*$. 
Since $v$ appears in the input trees, it must be an ancestor of some leaf $i \notin H'$, where $i \in \{1, \ldots, n\}$. 
As $i$ must appear in $T^*$, there must exist a subsequent MCAT step that incorporates $i$ into the construction, yielding a contradiction. 

The resulting tree $T^*$ is a candidate solution for the {\sc Closest} and {\sc Median} problems. 
Its objective value is given by the maximum distance to the input trees and the sum of the distances to the input trees, respectively. 
Since MCAT can be computed in $O(n^{3})$, and at most $O(n)$ contractions occur during the consensus process, the overall running time of the \textsc{Generalized MCAT Consensus Construction} algorithm is $O(k n^{4})$.

\subparagraph{{\sc Phylogenetic-Inspired Consensus by Pair Merging} algorithm.}

A phylogenetic tree is a graphical representation, in the form of a tree, of the evolutionary relationships between biological entities, usually sequences or species~\cite{kapli2020phylogenetic}. 
Each leaf represents distinct species and each internal node of this tree represents the most common ancestor of all descendants below that point. 
Thus, the root node corresponds to the most common ancestor of all the species leaves in the tree.

Based on this concept, the second approach\footnote{described in \url{https://github.com/ThiagoLNascimento/Tree-Distance-UFF-NIteroi-ETH-Zurich/tree/main}} takes $k$ trees as input and iteratively merges the closest pair. 
Among all pairs, let $T_i$ and $T_j$ be a closest pair. 
These two trees are replaced by a tree $T_m$ lying on a shortest path between them, such that $d_{SBM}(T_m,T_i) = d_{SBM}(T_m,T_j)$. 
The process then continues with the remaining $k-1$ trees. 
Repeating this procedure $k-1$ times yields a single tree $T^*$, which is a candidate solution for the {\sc Closest} and {\sc Median} problems. 

In a naive implementation, all pairs are evaluated at each step, resulting in 
$
\sum_{m=2}^{k} \binom{m}{2} = \binom{k+1}{3} = O(k^{3}) 
$ 
distance computations. 
Since each computation requires one MCAT execution in $O(n^{3})$, the total running time is $O(k^{3} n^{3})$. 
Using a cached distance matrix, this can be reduced to $O(k^{2} n^{3})$.

\vspace{-.3cm}
\section{Experiments}\label{sec:exp}



\subparagraph{Median and Closest.}
The implementations\footnote{The experiments with the implementations are available at: \url{https://github.com/ThiagoLNascimento/Tree-Distance-UFF-NIteroi-ETH-Zurich}. 
They were conducted on a computer with 32 GB RAM and 66 GB HDD. 
The tests were performed on a virtual machine using a standard KVM configuration, with 4 GiB RAM and an 8-core Intel(R) Core(TM) i7-8700 CPU @ 3.20GHz.} 
of the algorithms described above were developed to compare execution times and the solutions they produce for the {\sc Closest} and {\sc Median} problems.
These algorithms use MCAT to estimate the distance between two trees, which provides an upper bound for the SBM distance introduced in this work.

We generated 100 distinct instances, each containing four strings describing the trees based on the Newick format.
These 100 instances were created for two sizes of trees, $10$ and $20$, determined by the number of leaves.
The height was limited to at most $\log n$, where $n$ is the number of leaves. 
For practical purposes, in applications such as acute myeloid leukemia (AML) profiling, AML cell trees obtained from selected gene panels typically contain at most a dozen nodes~\cite{morita2020clonal, schwede2024mutation}.

\autoref{tab:algorithm}(i) shows the results for the {\sc Generalized MCAT Consensus Construction} algorithm, while
\autoref{tab:algorithm}(ii) shows the results for the {\sc Phylogenetic-Inspired Consensus by Pair Merging} algorithm. 
The first column of both tables shows the number of leaves in the input trees. 
The second and third columns report 
the average min-max normalized values, defined as
$
\frac{(f - \text{LB})}{(\text{UB} - \text{LB})},
$
where $f$ is the value obtained by the algorithm, $\text{LB}$ is a known problem-specific lower bound, and $\text{UB}$ is the distance achieved by the best input tree. 
The min-max normalization is a linear transformation that scales all values to the interval  $[0,1]$. 
This transformation ensures that all data, regardless of their original range, are directly comparable and contribute equally to the analysis.
The minimum value ($\text{LB}$) is mapped to $0$, the maximum value ($\text{UB}$) is mapped to $1$, and all intermediate values are mapped accordingly. 
Under this normalization, better solutions have values closer to $0$.

For the {\sc Median} problem, the lower bound $\text{LB}$ is given by the triangle inequality and satisfies $\frac{\sum_{x < y} d(T_x, T_y)}{k - 1}$. 
For the {\sc Closest} problem, the lower bound is $\frac{\max_{x < y} d(T_x, T_y)}{2}$, where $S = \{ T_1, T_2, \dots, T_k \}$ is the set of input trees~\cite{cunha2019genome}.

The upper bound $\text{UB}$ is computed by evaluating all input trees as potential solutions and selecting the one that minimizes the corresponding objective function. 
Hence, the min-max normalization reflects how close the solution found by the algorithm is to the ideal lower bound, relative to the performance of the best input tree.


\begin{table}[ht]
\scriptsize
    \centering
    \vspace{-.3cm}
    \begin{tabular}{c c}
        \begin{tabular}{|m{1.4cm} m{1.4cm} m{1.4cm}|} 
             \hline
              \char"0023 leaves  & Median & Closest\\ [0.5ex] 
             \hline\hline
             10 &  0.56 & 0.62 \\ 
             \hline
             20 &  0.55 & 0.62 \\
             \hline
        \end{tabular}
        
        &

        \begin{tabular}{|m{1.4cm} m{1.4cm} m{1.4cm}|} 
             \hline
             \char"0023 leaves & Median & Closest \\ [0.5ex] 
             \hline\hline
             10 & 0.51 & 0.43 \\ 
             \hline
             20 & 0.54 & 0.48 \\
             \hline
        \end{tabular} \\
    (i) & (ii)
    \end{tabular}
\caption{
Average of the min-max normalizations computed across 100 instances, each consisting of 4 input trees. 
Table (i) considers the {\sc generalized MCAT consensus construction}, while Table (ii) considers the {\sc phylogenetic-inspired consensus by pair merging}.
}
\label{tab:algorithm}

\end{table}


\vspace{-.3cm}
\subparagraph{Summarizing trees generated by MCMC schemes.} 
Now, we focus on the following constructive approach. 
Given a set of imperfect trees (recall that these are trees with false negative, false positive or missing data), we investigate whether a consensus represents the ground truth better. 
This is done by generating single-cell data and building a consensus from a set of trees generated through another well-known approach that uses Markov chain Monte Carlo (MCMC) schemes~\cite{jahn2016tree}.

To generate the set of trees, we used SCITE\footnote{SCITE was developed in~\cite{jahn2016tree}. 
Implementations available at: \url{https://github.com/cbg-ethz/SCITE}} to compute the mutation history of somatic cells. 
It is designed for reconstructing mutation histories of tumors based on mutation profiles obtained from single-cell sequencing experiments.

As input for SCITE, 
we used the same three datasets as in that work, each yielding a set of trees. 
The three datasets, called Hou18, Hou78 and Navin, comprise a total of 87 trees with 19 leaves, 6 trees with 79 leaves and 20 trees with 41 leaves, respectively. 
The datasets Hou18 and Hou78 were based on whole-genome single-cell sequencing of cells derived from a patient of \emph{essential thrombocythemia}, which is a type of chronic blood cancer.
Both datasets differ by the number of genes used as input for SCITE; Hou18 uses 18 genes, while Hou78 uses 78 genes.
The dataset Navin is based on a study of a breast cancer patient using single-cell sequencing.
This dataset contains 40 genes detected by single-nucleus exome sequencing. 
The complete study for the creation of both Hou18 and Hou78 datasets can be found in~\cite{hou2012single} while for the Navin dataset in~\cite{wang2014clonal}.

Because the number of missing values in these imperfect trees tends to be large, the position of some leaves is uncertain.
Despite this, 
most trees share a common internal structure; 
when differences occur, they are typically minor, often involving only a single leaf in a different position. 
An additional advantage of MCAT over the phylogenetic-inspired approach 
is that the former preserves the internal structure of the trees, so the resulting consensus remains similar to the input trees, while changing only leaf positions to improve the solution. 


Using these trees as input, we construct a consensus tree based on the previously described algorithms. The results for the {\sc Closest} and the {\sc Median} problems are shown in \autoref{tab:datasets}.



\autoref{tab:datasets} is organized as follows: 
The first column lists the datasets used in the experiments. 
The second column presents the number of leaves in each tree, while the third column indicates the execution time for each algorithm. 
Similar to \autoref{tab:algorithm}, the fourth and fifth columns report the min-max normalizations of the trees found by the algorithms, 
where the lower bounds for the {\sc Median} and {\sc Closest} problems are derived from the triangle inequality and from pairwise distances among the input trees, respectively. 

\begin{table}[ht]
\centering
\scriptsize
\begin{tabular}{c c}
    \centering
        \begin{tabular}{|m{1cm} m{1.2cm} m{1.4cm} m{1.4cm} m{1.2cm}|} 
         \hline
          Name & $\#$leaves & Time (s) & Median & Closest\\ [0.5ex] 
         \hline\hline
         Hou18 & 19 & 1.177 & 0.56 & 0.73 \\ 
         \hline
         Hou78 & 79 & 0.024 & 0.27 & 0.8 \\
         \hline
         Navin & 41 & 31.017 & 0.60 & 1.0 \\
         \hline
        \end{tabular}

        &
        
        \begin{tabular}{|m{1cm} m{1.2cm} m{1.4cm} m{1.4cm} m{1.2cm}|} 
         \hline
          Name & $\#$leaves & Time (s) & Median & Closest\\ [0.5ex] 
         \hline\hline
         Hou18 & 19 & 1091.734 & 0.23 & 0.33 \\ 
         \hline
         Hou78 & 79 & 93.27 & 0.22 & 0.2\\
         \hline
         Navin & 41 & 3754.927 & 0.8 & 0.63 \\
         \hline
        \end{tabular}\\
    (i) & (ii)
    \end{tabular}
\caption{
Min-max normalized values obtained by the algorithms on the real datasets Hou18, Hou78, and Navin, used in SCITE~\cite{jahn2016tree}. 
Table (i) considers the {\sc generalized MCAT consensus construction}, while Table (ii) considers the {\sc phylogenetic-inspired consensus by pair merging}. 
}
\label{tab:datasets}
\end{table}

\autoref{tab:algorithm} and \autoref{tab:datasets} show that the proposed algorithms perform consistently across datasets, adapting well to varied inputs and producing accurate consensus trees at low computational cost. 
Min-max normalization results indicate that our method yields consensus trees with better scores than any individual input tree. 
Even in the worst case, where the solution coincides with an input tree, the fast execution of the algorithms, especially the {\sc Generalized MCAT} consensus construction, maintains the practical applicability of the approach. 
These algorithms are well suited for scenarios where fast approximations are preferred over exact optimization, such as exploratory analyses or repeated runs on perturbed data.



\vspace{-.3cm}
\section{Conclusion}
\vspace{-.3cm}


Single-cell sequencing introduces noise and uncertainty, and many existing approaches rely heavily on probabilistic models. In contrast, our work provides a structural, algorithmic perspective, introducing a new tree-editing operation (SBM) and establishing its theoretical properties, including an \NP-completeness proof and efficient upper bounds on the distance. 

We also present new formulations for the {\sc median} and {\sc closest} tree problems, prove their \NP-completeness, and develop practical algorithms evaluated on real and synthetic datasets that consistently outperform the input trees in terms of consensus quality. These results show that 
SBM-based methods offer a robust way to summarize and compare cell trees, even under noisy data.

By addressing challenges that arise specifically in cancer evolution while remaining compatible with broader phylogenetic and comparative genomics settings, our framework bridges theoretical complexity with practical utility and provides a foundation for more biologically constrained tree-distance measures.

For future work, we aim to explore other methods to perform pattern matching and summarize trees more effectively.
A promising direction is the use of Hyperdimensional Computing (HDC), an artificial-intelligence approach to pattern matching. 
The potential of HDC in genomic data analysis and pattern matching is illustrated by recent works such as~\cite{kim2020geniehd}, which presents a hardware-software framework for parallel DNA pattern matching, and~\cite{chen2023sparsity}, which maps genome sequences into high-dimensional space for efficient sequence matching.






\bibliography{Wabi-atual}

@inproceedings{kim2020geniehd,
  title={GenieHD: Efficient DNA Pattern Matching Accelerator Using Hyperdimensional Computing},
  author={Kim, Y. and Imani, M. and Moshiri, N. and Rosing, T.},
  booktitle={Proceedings of the IEEE/ACM Design Automation and Test in Europe Conference (DATE)},
  year={2020},
  pages={320--325},
  publisher={IEEE},
  doi={10.1109/DATE48585.2020.9116397}
}

@inproceedings{chen2023sparsity,
  title={Sparsity Controllable Hyperdimensional Computing for Genome Sequence Matching Acceleration},
  author={Chen, H. and Kim, Y. and Sadredini, E. and Gupta, S. and Latapie, H. and Imani, M.},
  booktitle={Proceedings of the IFIP/IEEE International Conference on Very Large Scale Integration (VLSI-SoC)},
  year={2023},
  pages={56--61},
  publisher={IEEE},
  doi={10.1109/VLSISoC57858.2023.1011223}
}

@article{nik2012life,
  title={The life history of 21 breast cancers},
  author={Nik-Zainal, Serena and Van Loo, Peter and Wedge, David C and Alexandrov, Ludmil B and Greenman, Christopher D and Lau, King Wai and Raine, Keiran and Jones, David and Marshall, John and Ramakrishna, Manasa and others},
  journal={Cell},
  volume={149},
  number={5},
  pages={994--1007},
  year={2012},
  publisher={Elsevier}
}

@article{gillies2012evolutionary,
  title={Evolutionary dynamics of carcinogenesis and why targeted therapy does not work},
  author={Gillies, Robert J and Verduzco, Daniel and Gatenby, Robert A},
  journal={Nature Reviews Cancer},
  volume={12},
  number={7},
  pages={487--493},
  year={2012},
  publisher={Nature Publishing Group UK London}
}

@article{schwede2024mutation,
  title={Mutation order in acute myeloid leukemia identifies uncommon patterns of evolution and illuminates phenotypic heterogeneity},
  author={Schwede, Matthew and Jahn, Katharina and Kuipers, Jack and Miles, Linde and Bowman, Robert and Robinson, Troy and Furudate, Ken and others},
  journal={Leukemia},
  pages={1--10},
  year={2024},
  publisher={Nature Publishing Group UK London}
}

@article{jahn2016tree,
  title={Tree inference for single-cell data},
  author={Jahn, Katharina and Kuipers, Jack and Beerenwinkel, Niko},
  journal={Genome Biology},
  volume={17},
  number={1},
  pages={86},
  year={2016},
  publisher={Springer}
}

@article{mcgranahan2015biological,
  title={Biological and therapeutic impact of intratumor heterogeneity in cancer evolution},
  author={McGranahan, Nicholas and Swanton, Charles},
  journal={Cancer Cell},
  volume={27},
  number={1},
  pages={15--26},
  year={2015},
  publisher={Elsevier}
}

@article{luo2023joint,
  title={Joint inference of exclusivity patterns and recurrent trajectories from tumor mutation trees},
  author={Luo, Xiang Ge and Kuipers, Jack and Beerenwinkel, Niko},
  journal={Nature Communications},
  volume={14},
  number={1},
  pages={3676},
  year={2023},
  publisher={Nature Publishing Group UK London}
}

@article{morita2020clonal,
  title={Clonal evolution of acute myeloid leukemia revealed by high-throughput single-cell genomics},
  author={Morita, Kiyomi and Wang, Feng and Jahn, Katharina and Hu, Tianyuan and Tanaka, Tomoyuki and Sasaki, Yuya and Kuipers, Jack and Loghavi, Sanam and Wang, Sa A and Yan, Yuanqing and others},
  journal={Nature Communications},
  volume={11},
  number={1},
  pages={5327},
  year={2020},
  publisher={Nature Publishing Group UK London}
}

@article{kuipers2017advances,
  title={Advances in understanding tumour evolution through single-cell sequencing},
  author={Kuipers, Jack and Jahn, Katharina and Beerenwinkel, Niko},
  journal={Biochimica et Biophysica Acta (BBA)-Reviews on Cancer},
  volume={1867},
  number={2},
  pages={127--138},
  year={2017},
  publisher={Elsevier}
}

@article{gorbalenya2017phylogeny,
  title={Phylogeny of viruses},
  author={Gorbalenya, Alexander E and Lauber, Chris},
  journal={Reference Module in Biomedical Sciences},
  year={2017},
  publisher={Elsevier}
}

@article{kapli2020phylogenetic,
  title={Phylogenetic tree building in the genomic age},
  author={Kapli, Paschalia and Yang, Ziheng and Telford, Maximilian J},
  journal={Nature Reviews Genetics},
  volume={21},
  number={7},
  pages={428--444},
  year={2020},
  publisher={Nature Publishing Group UK London}
}

@article{lozano2004maximum,
  title={On the maximum common embedded subtree problem for ordered trees},
  author={Lozano, Antoni and Valiente, Gabriel},
  journal={String Algorithmics},
  pages={155--170},
  year={2004},
  publisher={Kingʼs College London Publications}
}

@InProceedings{cunha2024complexity,
  author =	{Cunha, Lu{\'\i}s and Sau, Ignasi and Souza, U\'{e}verton},
  title =	{{On the Complexity of the Median and Closest Permutation Problems}},
  booktitle =	{24th International Workshop on Algorithms in Bioinformatics (WABI 2024)},
  pages =	{2:1--2:23},
  ISBN =	{978-3-95977-340-9},
  ISSN =	{1868-8969},
  year =	{2024},
  volume =	{312},
  address =	{Dagstuhl, Germany},
  doi =		{10.4230/LIPIcs.WABI.2024.2}
}

@article{cardona2008extended,
  title={Extended Newick: it is time for a standard representation of phylogenetic networks},
  author={Cardona, Gabriel and Rossell{\'o}, Francesc and Valiente, Gabriel},
  journal={BMC bioinformatics},
  volume={9},
  number={1},
  pages={1--8},
  year={2008},
  publisher={BioMed Central}
}

@article{cunha2020computational,
  title={On the computational complexity of closest genome problems},
  author={Cunha, Luis and Feij{\~a}o, Pedro and dos Santos, Vin{\'\i}cius and Kowada, Luis and de Figueiredo, Celina},
  journal={Discrete Applied Mathematics},
  volume={274},
  pages={26--34},
  year={2020},
  publisher={Elsevier}
}

@article{caprara2003reversal,
  title={The reversal median problem},
  author={Caprara, Alberto},
  journal={INFORMS J. Comput.},
  volume={15},
  number={1},
  pages={93--113},
  year={2003},
  publisher={INFORMS}
}

@article{cunha2019genome,
  title={Genome Rearrangements on Multigenomic Models: Applications of Graph Convexity Problems},
  author={Cunha, Luis and Protti, F{\'a}bio},
  journal={J. Comput. Biol.},
  volume={26},
  number={11},
  pages={1214--1222},
  year={2019},
  publisher={Mary Ann Liebert, Inc., publishers 140 Huguenot Street, 3rd Floor New~…}
}

@article{bader2011transposition,
  title={The transposition median problem is {NP}-complete},
  author={Bader, Martin},
  journal={Theor. Comput. Sci.},
  volume={412},
  number={12-14},
  pages={1099--1110},
  year={2011},
  publisher={Elsevier}
}

@inproceedings{haghighi2012medians,
  title={Medians seek the corners, and other conjectures},
  author={Haghighi, Maryam and Sankoff, David},
  booktitle={BMC bioinformatics},
  volume={13},
  pages={1--7},
  year={2012},
  organization={Springer}
}

@article{bryant1998complexity,
  title={The complexity of the breakpoint median problem},
  author={Bryant, David},
  journal={Centre de recherches mathematiques, Technical Repert},
  year={1998},
  publisher={Citeseer}
}

@inproceedings{pe1998median,
  title={The median problems for breakpoints are {NP}-complete},
  author={Pe’er, Itsik and Shamir, Ron},
  booktitle={Elec. Colloq. on Comput. Complexity},
  volume={71},
  number={5},
  year={1998}
}

@article{akutsu1992rnc,
  title={An RNC algorithm for finding a largest common subtree of two trees},
  author={Akutsu, Tatsuya},
  journal={IEICE TRANSACTIONS on Information and Systems},
  volume={75},
  number={1},
  pages={95--101},
  year={1992},
  publisher={The Institute of Electronics, Information and Communication Engineers}
}

@article{wang2014clonal,
  title={Clonal evolution in breast cancer revealed by single nucleus genome sequencing},
  author={Wang, Yong and Waters, Jill and Leung, Marco L and Unruh, Anna and Roh, Whijae and Shi, Xiuqing and Chen, Ken and Scheet, Paul and Vattathil, Selina and Liang, Han and others},
  journal={Nature},
  volume={512},
  number={7513},
  pages={155--160},
  year={2014},
  publisher={Nature Publishing Group UK London}
}

@article{hou2012single,
  title={Single-cell exome sequencing and monoclonal evolution of a JAK2-negative myeloproliferative neoplasm},
  author={Hou, Yong and Song, Luting and Zhu, Ping and Zhang, Bo and Tao, Ye and Xu, Xun and Li, Fuqiang and Wu, Kui and Liang, Jie and Shao, Di and others},
  journal={Cell},
  volume={148},
  number={5},
  pages={873--885},
  year={2012},
  publisher={Elsevier}
}

@article{allen2001subtree,
  title={Subtree transfer operations and their induced metrics on evolutionary trees},
  author={Allen, Benjamin L and Steel, Mike},
  journal={Annals of combinatorics},
  volume={5},
  number={1},
  pages={1--15},
  year={2001},
  publisher={Springer}
}

@inproceedings{dasgupta2000computing,
  title={On computing the nearest neighbor interchange distance},
  author={DasGupta, Bhaskar and He, Xin and Jiang, Tao and Li, Ming and Tromp, John and Zhang, Louxin},
  booktitle={Proc. DIMACS Workshop on Discrete Problems with Medical Applications},
  volume={55},
  pages={125--143},
  year={2000}
}

@article{szekely2005subtrees,
  title={On subtrees of trees},
  author={Sz{\'e}kely, L{\'a}szl{\'o} A and Wang, Hua},
  journal={Advances in Applied Mathematics},
  volume={34},
  number={1},
  pages={138--155},
  year={2005},
  publisher={Elsevier}
}

@article{aguse2019summarizing,
  title={Summarizing the solution space in tumor phylogeny inference by multiple consensus trees},
  author={Aguse, Nuraini and Qi, Yuanyuan and El-Kebir, Mohammed},
  journal={Bioinformatics},
  volume={35},
  number={14},
  pages={i408--i416},
  year={2019},
  publisher={Oxford University Press}
}

@inproceedings{govek2018consensus,
  title={A consensus approach to infer tumor evolutionary histories},
  author={Govek, Kiya and Sikes, Camden and Oesper, Layla},
  booktitle={Proceedings of the 2018 Acm international conference on bioinformatics, computational biology, and health informatics},
  pages={63--72},
  year={2018}
}

@book{johnson1979computers,
  title={Computers and intractability: A guide to the theory of NP-completeness},
  author={Johnson, David S and Garey, Michael R},
  year={1979},
  publisher={WH Freeman}
}

\newpage

\appendix

\section{Example for the SBM operation}\label{sec:appOperations}








\autoref{fig:operationSBM} serves as an example to illustrate the SBM operation.

\begin{figure}[ht]
    \centering
    \scalebox{0.7}{
    \scalebox{0.6}{
\begin{tikzpicture}

    \draw (0,0) -- (1,0);
    \draw (1,0) -- (1.5,1);
    \draw (1.5,1) -- (2,0);
    \draw (2,0) -- (3,0);
    \draw (2,0) -- (2,1);
    \draw (1,0) -- (1,-1);
    \draw (1,-1) -- (2,-2);
    \draw (1,-1) -- (0,-2);
    \draw (0,-2) -- (1,-3);
    \draw (0,-2) -- (-1,-3);

    \node at (-0.3,0) {$a$};
    \node at (2,1.2) {$b$};
    \node at (3.3,0) {$c$};
    \node at (2,-2.2) {$d$};
    \node at (-1,-3.2) {$e$};
    \node at (1,-3.2) {$f$};

    \node at (1.5,-3.5) {\large $T_1$};

    \node[shape=circle,draw=red,thick, fill=red, scale=0.6] at (1,0) {};
    \node[shape=circle,draw=black,thick, fill=black, scale=0.6] at (2,0) {};
    \node[shape=circle,draw=black,thick, fill=black, scale=0.6] at (1,-1) {};
    \node[shape=circle,draw=black,thick, fill=black, scale=0.6] at (0,-2) {};

    \node at (1.5,1.4) {$R$};
    \node[shape=circle,draw=black,thick, fill=black, scale=0.8] at (1.5,1) {};
    
\end{tikzpicture}
}
\scalebox{0.6}{
\begin{tikzpicture}

    \draw (0,0) -- (1,0);
    \draw (1,0) -- (2,0);
    \draw (1.5,1) -- (2,0);
    \draw (2,0) -- (3,0);
    \draw (2,0) -- (2,1);
    \draw (1,0) -- (1,-1);
    \draw (1,-1) -- (2,-2);
    \draw (1,-1) -- (0,-2);
    \draw (0,-2) -- (1,-3);
    \draw (0,-2) -- (-1,-3);

    \node at (-0.3,0) {\color{red}$a$};
    \node at (2,1.2) {$b$};
    \node at (3.3,0) {$c$};
    \node at (2,-2.2) {$d$};
    \node at (-1,-3.2) {$e$};
    \node at (1,-3.2) {$f$};

    \node at (1.5,-3.5) {\large $T_2$};

    \node[shape=circle,draw=black,thick, fill=black, scale=0.6] at (1,0) {};
    \node[shape=circle,draw=black,thick, fill=black, scale=0.6] at (2,0) {};
    \node[shape=circle,draw=black,thick, fill=black, scale=0.6] at (1,-1) {};
    \node[shape=circle,draw=black,thick, fill=black, scale=0.6] at (0,-2) {};

    \node at (1.5,1.4) {$R$};
    \node[shape=circle,draw=black,thick, fill=black, scale=0.8] at (1.5,1) {};
    
\end{tikzpicture}
}
\scalebox{0.6}{
\begin{tikzpicture}

    \draw (2,0) -- (2,-1);
    \draw (1,0) -- (2,0);
    \draw (1.5,1) -- (2,0);
    \draw (2,0) -- (3,0);
    \draw (2,0) -- (2,1);
    \draw (1,0) -- (1,-1);
    \draw (1,-1) -- (2,-2);
    \draw (1,-1) -- (0,-2);
    \draw (0,-2) -- (1,-3);
    \draw (0,-2) -- (-1,-3);

    \node at (2,-1.2) {\color{red}$a$};
    \node at (2,1.2) {$b$};
    \node at (3.3,0) {$c$};
    \node at (2,-2.2) {$d$};
    \node at (-1,-3.2) {$e$};
    \node at (1,-3.2) {$f$};

    \node at (1.5,-3.5) {\large $T_3$};

    \node[shape=circle,draw=black,thick, fill=black, scale=0.6] at (1,0) {};
    \node[shape=circle,draw=black,thick, fill=black, scale=0.6] at (2,0) {};
    \node[shape=circle,draw=black,thick, fill=black, scale=0.6] at (1,-1) {};
    \node[shape=circle,draw=black,thick, fill=black, scale=0.6] at (0,-2) {};

    \node at (1.5,1.4) {$R$};
    \node[shape=circle,draw=black,thick, fill=black, scale=0.8] at (1.5,1) {};
    
\end{tikzpicture}
}
\scalebox{0.6}{
\begin{tikzpicture}

    \draw (0,0) -- (1.5,1);
    \draw (1,0) -- (2,0);
    \draw (1.5,1) -- (2,0);
    \draw (2,0) -- (3,0);
    \draw (2,0) -- (2,1);
    \draw (1,0) -- (1,-1);
    \draw (1,-1) -- (2,-2);
    \draw (1,-1) -- (0,-2);
    \draw (0,-2) -- (1,-3);
    \draw (0,-2) -- (-1,-3);

    \node at (0,-0.3) {$a$};
    \node at (2,1.2) {\color{red}$b$};
    \node at (3.3,0) {$c$};
    \node at (2,-2.2) {$d$};
    \node at (-1,-3.2) {$e$};
    \node at (1,-3.2) {$f$};

    \node at (1.5,-3.5) {\large $T_4$};

    \node[shape=circle,draw=black,thick, fill=black, scale=0.6] at (1,0) {};
    \node[shape=circle,draw=black,thick, fill=black, scale=0.6] at (2,0) {};
    \node[shape=circle,draw=black,thick, fill=black, scale=0.6] at (1,-1) {};
    \node[shape=circle,draw=black,thick, fill=black, scale=0.6] at (0,-2) {};

    \node at (1.5,1.4) {$R$};
    \node[shape=circle,draw=black,thick, fill=black, scale=0.8] at (1.5,1) {};
    
\end{tikzpicture}
}
\scalebox{0.6}{
\begin{tikzpicture}

    \draw (0,0) -- (1.5,1);
    \draw(1,0) -- (1.5,1);
    \draw (2,0) -- (1,0);
    \draw (2,0) -- (3,0);
    \draw (2,1) -- (1,0);
    \draw (1,-1) -- (2,-1) -- (2,0);
    \draw (1,-1) -- (2,-2);
    \draw (1,-1) -- (0,-2);
    \draw (0,-2) -- (1,-3);
    \draw (0,-2) -- (-1,-3);

    \node at (-0.3,0) {$a$};
    \node at (2,1.2) {$b$};
    \node at (3.3,0) {$c$};
    \node at (2,-2.2) {\color{red}$d$};
    \node at (-1,-3.2) {$e$};
    \node at (1,-3.2) {$f$};

    \node at (1.5,-3.5) {\large $T_5$};

    \node[shape=circle,draw=black,thick, fill=black, scale=0.6] at (1,0) {};
    \node[shape=circle,draw=black,thick, fill=black, scale=0.6] at (2,0) {};
    \node[shape=circle,draw=black,thick, fill=black, scale=0.6] at (1,-1) {};
    \node[shape=circle,draw=black,thick, fill=black, scale=0.6] at (0,-2) {};

    \node at (1.5,1.4) {$R$};
    \node[shape=circle,draw=black,thick, fill=black, scale=0.8] at (1.5,1) {};
    
\end{tikzpicture}
}
\scalebox{0.6}{
\begin{tikzpicture}

    \draw (0,0) -- (1.5,1);
    \draw(1,0) -- (1.5,1);
    \draw (2,0) -- (1,0);
    \draw (2,0) -- (3,0);
    \draw (2,1) -- (1,0);
    \draw (1,-1) -- (2,-1) -- (2,0);
    \draw (0,-2) -- (2,-2);
    \draw (1,-1) -- (0,-2);
    \draw (0,-2) -- (1,-3);
    \draw(0,-2) -- (-1,-3);

    \node at (-0.3,0) {$a$};
    \node at (2,1.2) {$b$};
    \node at (3.3,0) {$c$};
    \node at (2,-2.2) {$d$};
    \node at (-1,-3.2) {\color{red}$e$};
    \node at (1,-3.2) {$f$};

    \node at (1.5,-3.5) {\large $T_6$};

    \node[shape=circle,draw=black,thick, fill=black, scale=0.6] at (1,0) {};
    \node[shape=circle,draw=black,thick, fill=black, scale=0.6] at (2,0) {};
    \node[shape=circle,draw=black,thick, fill=black, scale=0.6] at (1,-1) {};
    \node[shape=circle,draw=black,thick, fill=black, scale=0.6] at (0,-2) {};

    \node at (1.5,1.4) {$R$};
    \node[shape=circle,draw=black,thick, fill=black, scale=0.8] at (1.5,1) {};
    
\end{tikzpicture}
}
\scalebox{0.6}{
\begin{tikzpicture}

    \draw (0,0) -- (1.5,1);
    \draw(1,0) -- (1.5,1);
    \draw (2,0) -- (1,0);
    \draw (2,0) -- (3,0);
    \draw (2,1) -- (1,0);
    \draw (1,-1) -- (2,-1) -- (2,0);
    \draw (2,-2) -- (0,-2);
    \draw (1,-1) -- (0,-2);
    \draw (0,-2) -- (1,-3);
    \draw (-1,-3) -- (-1,-1) -- (1,-1);

    \node at (-0.3,0) {$a$};
    \node at (2,1.2) {$b$};
    \node at (3.3,0) {$c$};
    \node at (2,-2.2) {$d$};
    \node at (-1,-3.2) {$e$};
    \node at (1,-3.2) {$f$};

    \node at (1.5,-3.5) {\large $T_7$};

    \node[shape=circle,draw=black,thick, fill=black, scale=0.6] at (1,0) {};
    \node[shape=circle,draw=black,thick, fill=black, scale=0.6] at (2,0) {};
    \node[shape=circle,draw=black,thick, fill=black, scale=0.6] at (1,-1) {};
    \node[shape=circle,draw=black,thick, fill=black, scale=0.6] at (0,-2) {};

    \node at (1.5,1.4) {$R$};
    \node[shape=circle,draw=black,thick, fill=black, scale=0.8] at (1.5,1) {};
    
\end{tikzpicture}
}
    }
    \caption{
    The vertices in red indicate the vertices where SBM moves are executed, i.e., the edge between the red vertex and its parent is removed and the red vertex is reconnected by a new edge.    }\label{fig:operationSBM}
\end{figure}

\vspace{-.7cm}
\section{Proofs deferred from~\autoref{sec:MCATsectionnew}}\label{app:B}

\vspace{-.3cm}
\noindent\autoref{thm:tree_distance_moves}.
{
Given two isomorphic trees $T_p$ and $T_q$, the subtree–movement distance 
$d_{SBM}(T_p, T_q)$ is at most the minimum number of elementary 
subtree–move operations required to transform $T_p$ into $T_q$. 
Each operation consists of moving all elements belonging to a branch 
$B_1$ to another branch $B_2$, following a bottom–up process over the tree layers.
}
\begin{proof}
Given two isomorphic trees $T_p$ and $T_q$, 
we first define a mapping between their bracket sets that maximizes 
the number of common elements in each pair of corresponding sets. 
Once the mapping is established, we recursively move subtrees across branches 
from the bottom up, as illustrated in \autoref{fig:move1} and \autoref{fig:move2}. 
Each move preserves the tree structure and decreases the number of unmatched 
bracket sets. The process terminates when all subtrees correspond, yielding $T_q$. 

Since each move changes exactly one mismatched subtree, the total number of moves is at most the minimal number of subtree–move operations needed to transform $T_p$ into $T_q$.

We describe how to move elements across branches of the tree (illustrated in \autoref{fig:move1}). 

\vspace{-.5cm}
\begin{figure}[!ht]
    \centering
    \includegraphics[width=.24\textwidth]{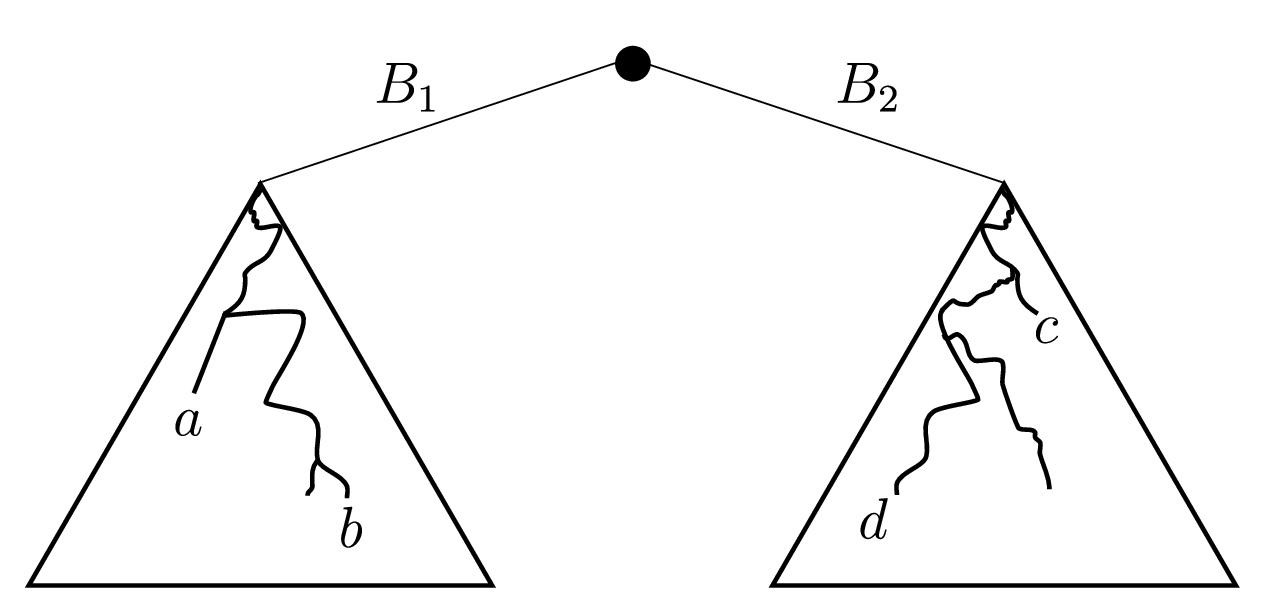}
    \includegraphics[width=.24\textwidth]{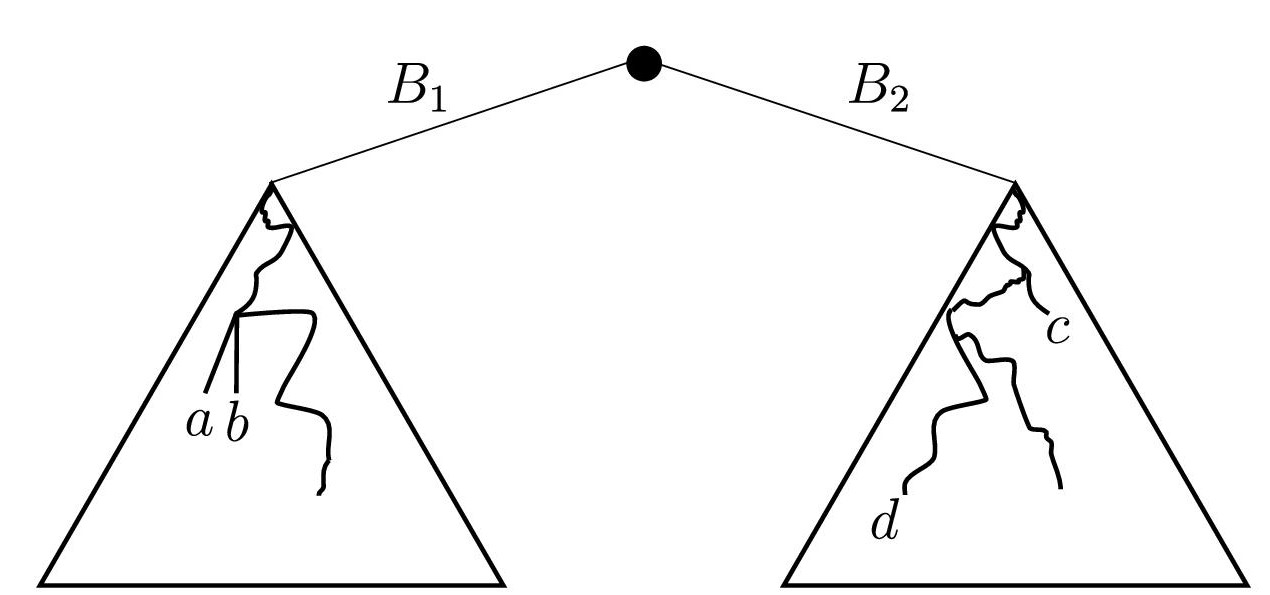}
    \includegraphics[width=.24\textwidth]{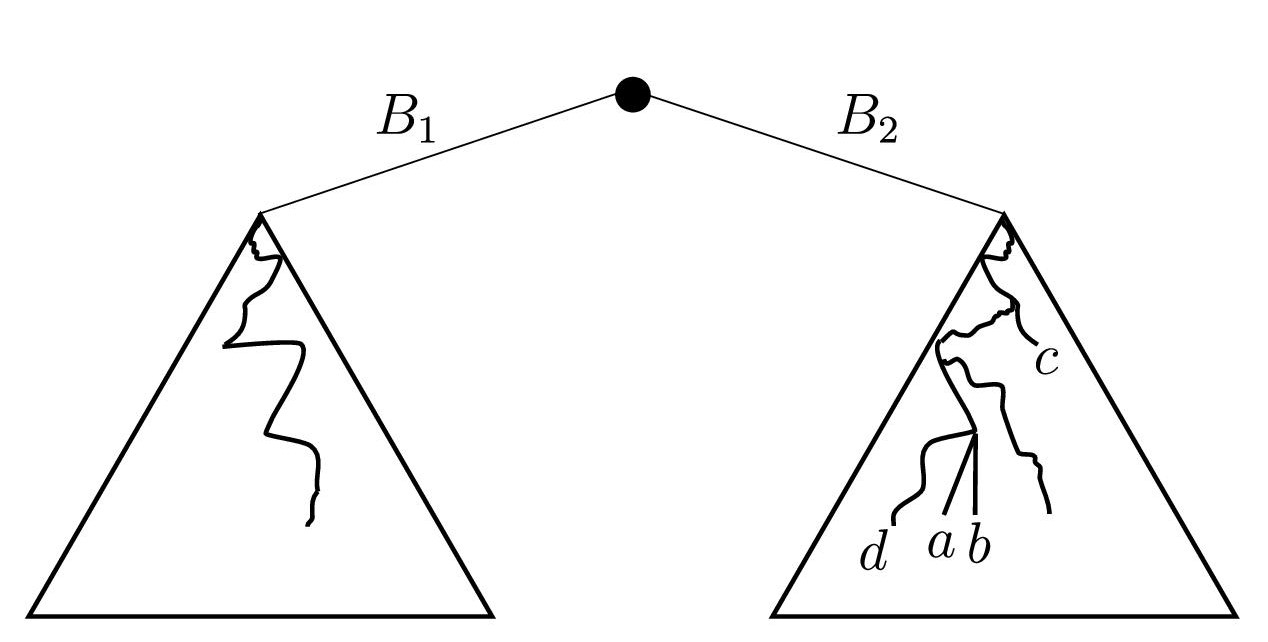}
    \includegraphics[width=.24\textwidth]{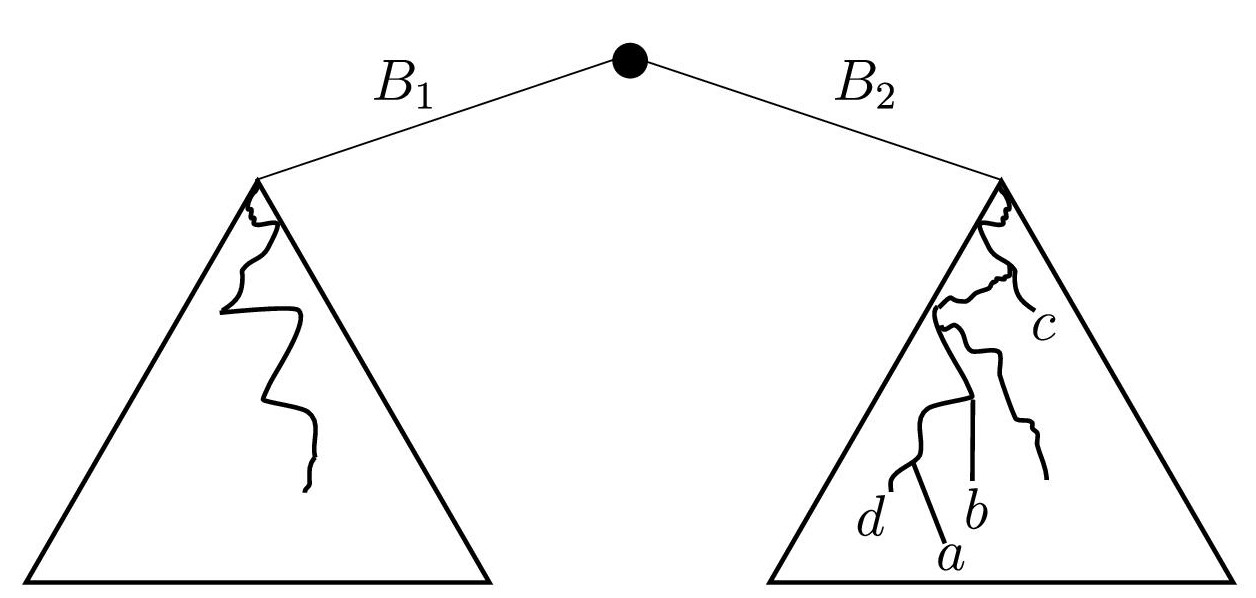}
    \caption{Illustration from $(i)$ to $(iv)$ of moves associated with elements $a$ and $b$ from $B_1$ to $B_2$. The moves of $c$ and $d$ follow similarly. \label{fig:move1}}
\end{figure}

\vspace{-.5cm}
Given two isomorphic trees $T_p$ and $T_q$, 
suppose the root of $T_p$ has one branch $B_1$ containing the elements $a$ and $b$ where $b$ is more distant than $a$ from the root, and in another branch $B_2$ it contains the elements $c$ and $d$, where $d$ is more distant than $c$ from the root. 
In addition, $a$ and $b$ need to move from $B_1$ to $B_2$, while $c$ and $d$ need to move from $B_2$ to $B_1$. 
We move $b$ to be at the same level as $a$, and then move both elements together to $B_2$. 
Similarly, we move $d$ to be at the same level as $c$, and then move both elements together to $B_1$. 
If $a$ and $b$ are on the same level but are not children of the same node, both must be moved until they reach the closer common ancestor in the tree, and then both elements must be moved together. 
By applying such moves recursively, we determine the distance between two trees by the number of such moves. 
\autoref{fig:move2} shows how to move elements from one branch to another. 

\begin{figure}[!ht]
    \centering
    \includegraphics[width=.45\textwidth]{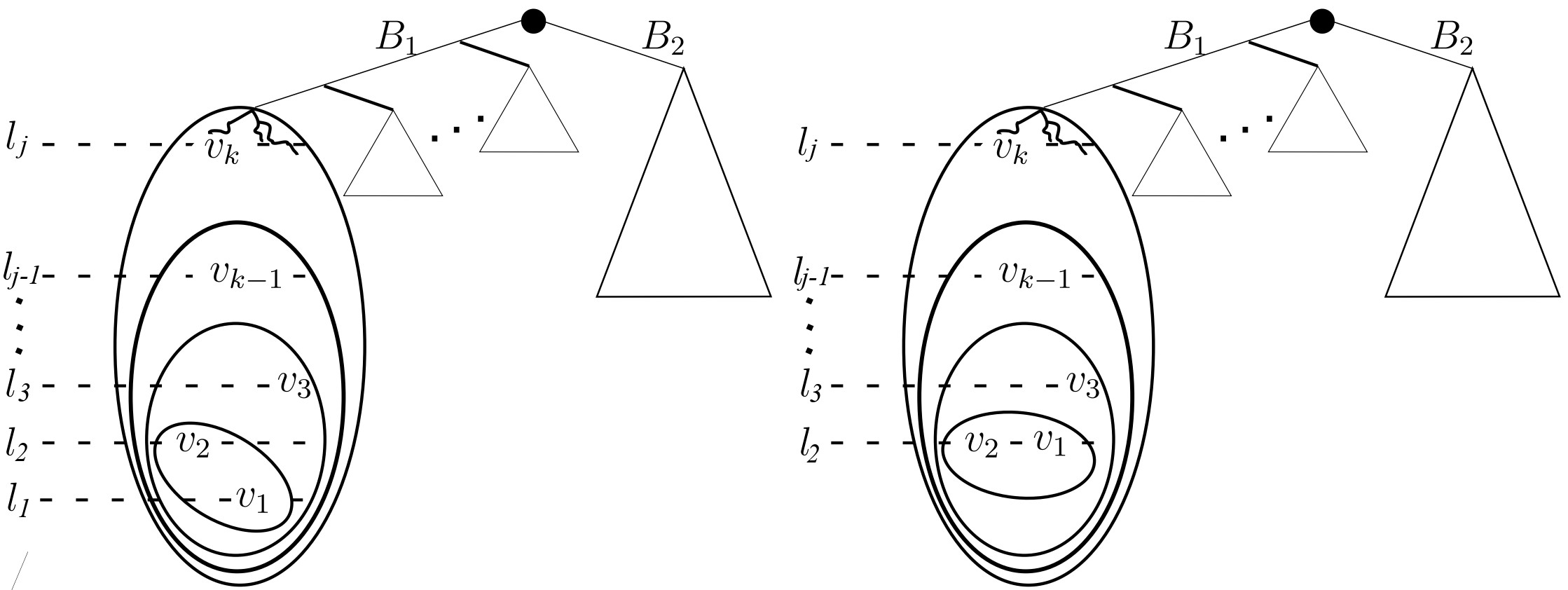}
    \includegraphics[width=.45\textwidth]{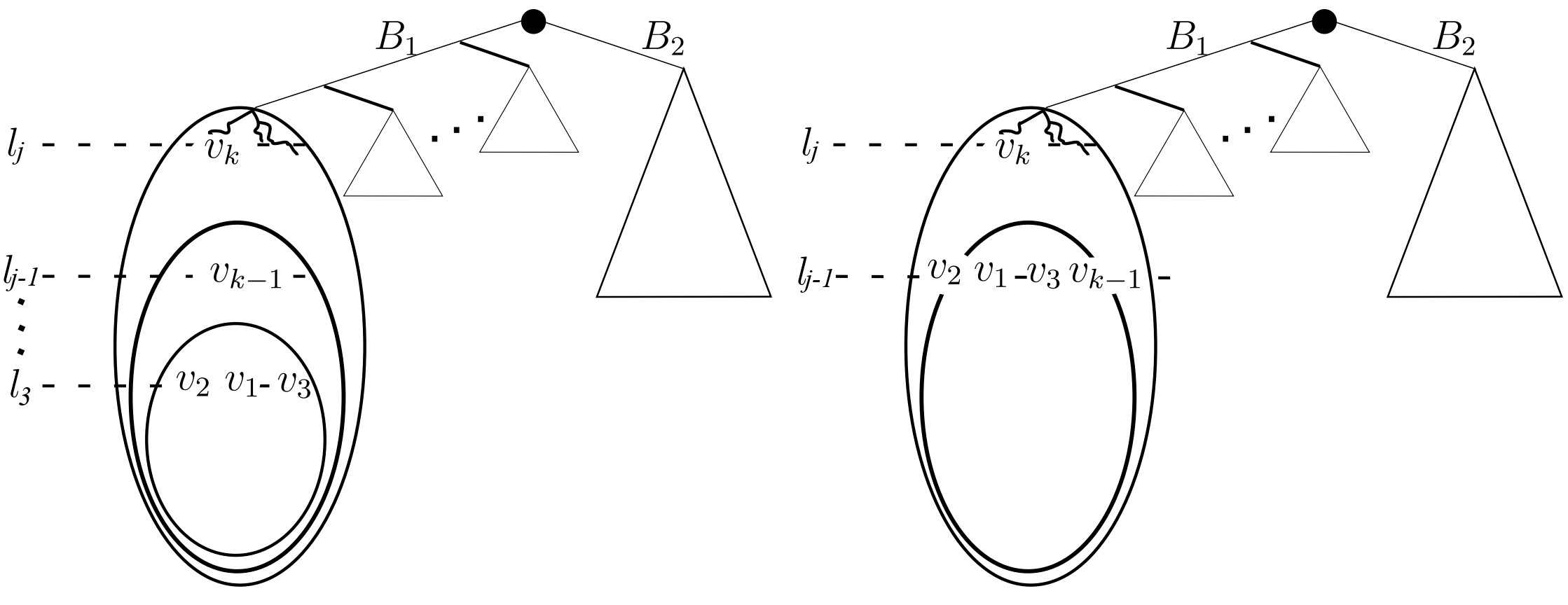}
    \vspace{-.3cm}
    \caption{
    Illustration of some moves associated with layers ($l_i$ is the $i$th layer for $i = 1, \ldots, J$), where in a bottom-up process we combine leaves from $v_1$ to $v_k$ and move them all together from $B_1$ to $B_2$.\label{fig:move2}}
\end{figure}

The process is initiated from the bottom up, with each layer defined in terms of the leaf farthest from the root of the tree. All of the $k$ elements ($v_1, v_2, \cdots, v_{k-1}, v_k$) must be moved from $B_1$ to~$B_2$.~
\end{proof}

\vspace{-.2cm}
\noindent\autoref{thm:MCAT}.
\emph{
MCAT can be solved in polynomial time.
}
\vspace{-.2cm}

\begin{proof}
Let us describe how to obtain a solution of MCAT (in a top-down process):
\vspace{-.1cm}
\begin{enumerate}
    \item For each node $v$ of $T_p$, with children $v_1,\ldots,v_k$, consider subsets $S$ of these children, whenever possible, such that the chosen siblings in $S$ correspond to a set $S'$ of sibling nodes in $T_q$ through a bijection mapping each $v_i$-tree, with $v_i \in S$, to a corresponding $v'_i$-tree in $T_q$. 
    Each $v_i$-tree may have a different number of leaves with distinct labels, 
    and a $v_i$-tree in $T_p$ need not have the same leaves as its corresponding isomorphic tree in $T_q$; however, all leaves in $S$ must occur in the corresponding set in $T_q$.

\item The solution is the one with the maximum number of leaves.
    
\end{enumerate}

\vspace{-.3cm}
Since for each node of $T_p$ we have to compare its $v$-tree with all nodes of $T_q$, the running time is $O(n^2)$, where $T_p$ and $T_q$ contain $O(n)$ nodes.
~\end{proof}

\noindent\autoref{thm:distance}.
\emph{
Given $T_p$ and $T_q$, $d_{SBM}(T_p,T_q) \leq d_{SBM}(T_p[H_v],T_q[H_v']) + d_{SBM}(T_p / H_v,T_q / H_v') + |n(T_p) - n(T_q)|$, 
where $H$ is a solution of MCAT$(T_p,T_q)$, $n(T)$ denotes the number of nodes in a tree $T$, and $T_p/H_v$ and $T_q/H'_v$ are the graphs obtained by contracting $H_v$ and $H'_v$ in $T_p$ and $T_q$, respectively.
This upper bound on $d_{SBM}(T_p,T_q)$ can be determined in polynomial time.}

\begin{proof}

After obtaining a solution to the MCAT problem, we determine the distance between the almost $v$-tree and the almost $v'$-tree, 
$d_{SBM}(T_p[H_v^{v_1, \cdots, v_j}], T_q[H_{v'}^{v'_1, \cdots, v'_j}])$, 
which can be obtained from \autoref{lm:isodist}. 

Next, we must verify whether either node $v$ or $v'$ has been previously flagged. 
A \emph{flagged node} indicates that a contraction has already occurred with that node as the root of its respective subtree. 

If node $v$ is flagged and $v'$ is not, we must move all nodes of $H_{v'}^{v'_1, \cdots, v'_j}$ in $T_q$ 
so that the node in $T_q$ corresponding to the same flag as $v$ becomes their new root. 
If the opposite is true, we perform the analogous process in $T_p$, moving all nodes of $H_v^{v_1, \cdots, v_j}$ 
to be rooted at the node that shares the flag with $v'$. 
If both $v$ and $v'$ are flagged with distinct markers, we perform the movement that results in the smallest number of operations. 
Note that in such cases, the value of $d_{SBM}(T_p[H_v^{v_1, \cdots, v_j}], T_q[H_{v'}^{v'_1, \cdots, v'_j}])$ must be incremented by one unit for each additional movement performed.

Finally, it is safe to reduce the input trees by contracting $H_v^{v_1, \cdots, v_j}$ in $T_p$ and $H_{v'}^{v'_1, \cdots, v'_j}$ in $T_q$ (by definition, any subsequent moves will not involve these subtrees). 
This contraction replaces each subtree with a single node, resulting in $T_p / H_v^{v_1, \cdots, v_j}$ and $T_q / H_{v'}^{v'_1, \cdots, v'_j}$, respectively.

Additionally, it is necessary to assign a common flag to nodes $v$ and $v'$—using a flag not previously employed—to indicate that both subtrees were contracted from corresponding structures in the two trees 
(\autoref{fig:contracting} illustrates an example of this contraction process).

\begin{figure}[!ht]
    \centering
    \includegraphics[width=6cm]{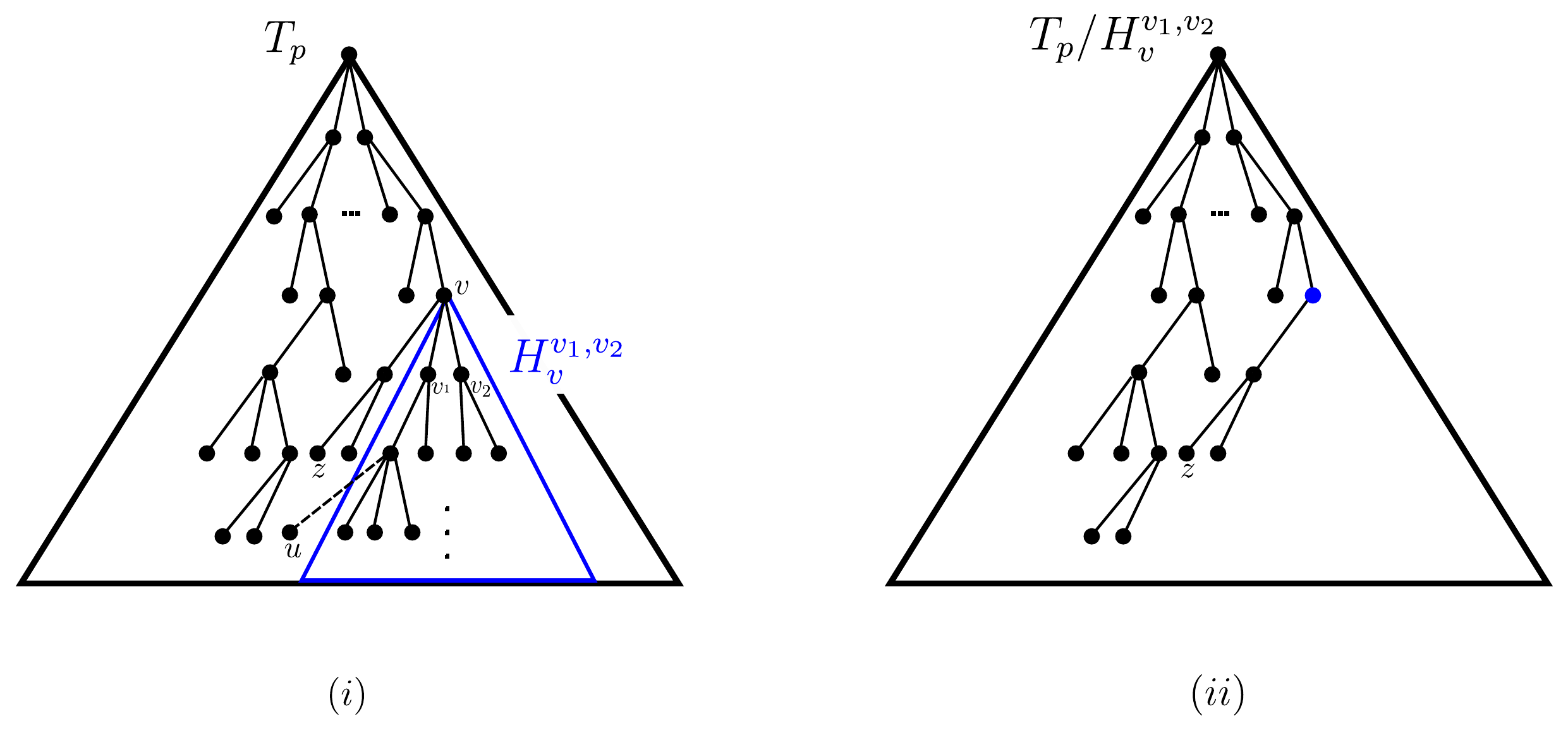}
    \vspace{-.5cm}
    \caption{$(i)$ The solution tree for MCAT is shown in blue, with its root as node $v$, and $v_1$ and $v_2$ as its children. Note that there is no node $u$ in $T$ such that it is not in the solution and its path to reach $v$ passes through a descendant of $v$. Therefore, the node $u$ shown in the figure does not exist, so it is safe to contract the subtree in blue. $(ii)$ The flagged node in blue is the representative node of $H_v^{v_1, \cdots, v_j}$ after its contraction. Moreover, each descendant of $v$ that is not part of the MCAT solution remains as in $(i)$, as well as the node $z$.\label{fig:contracting}}
\end{figure}

Note that there is no leaf node in $T_p - H_v^{v_0, v_1, \cdots, v_j}$ that needs to pass through any node in $H_v^{v_0, v_1, \cdots, v_j}$ in order to be correctly aligned with respect to $T_q$, because all descendants of the children of $v$ within $H_v^{v_0, v_1, \cdots, v_j}$ are also present in $H_{v'}^{v_0, v_1, \cdots, v_j}$. 
Thus, the contraction operation preserves the distance between the trees, except for the internal operations required to transform $H_v^{v_0, v_1, \cdots, v_j}$ into $H_{v'}^{v'_0, v'_1, \cdots, v'_j}$. The only leaves that must still be corrected and that pass through $H_v^{v_0, v_1, \cdots, v_j}$ necessarily go through the node $v$, which is now represented by the flagged node resulting from the contraction.


Let $H$ be a solution of MCAT$(T_p,T_q)$. This gives the following upper bound:
$d(T_p,T_q) \leq d(T_p[H_v],T_q[H_v']) + d(T_p / H_v,T_q / H_v') + |n(T_p) - n(T_q)|$, 
where $n(T)$ denotes the number of nodes in a tree $T$, and $T_p/H_v$ and $T_q/H'_v$ are obtained by contracting $H_v$ in $T_p$ and $H'_v$ in $T_q$, respectively. 
Since the set of leaves of $T_p[H]$ is equal to the set of leaves of $T_q[H]$, then 
we apply \autoref{lm:isodist} to compute $d_{SBM}(T_p[H_v],T_q[H_v'])$, and then we compute recursively the distance between $T_p / H_v$ and $T_q / H_v'$. 
In each call of the MCAT problem, a solution subtree $H$ is an almost $v$-tree, which is common in both trees, hence with the same number of leaves and internal nodes. 
Thus, as a consequence of the MCAT problem, when the recursive call reaches a base case, 
there are $|n(T_p)-n(T_q)|$ internal nodes contained in one tree but not in the other. 
For each of these nodes, it is necessary to perform an extra operation, removing one internal node in each of them.



Since a solution of MCAT can be determined in quadratic time and the recursion calls a MCAT solution at most $n(T_p)$ times, we can compute the distance between the input trees in cubic running time, in the worst case.
~\end{proof}

\vspace{-.3cm}
\section{Proofs deferred from \autoref{sec:closest}}\label{sec:appB}

\noindent\autoref{thm:breakpointRestriction}.
\emph{
{\sc Breakpoint-Median$_3$} is \NP-complete, even for instances that do not contain two consecutive breakpoints. 
}

\begin{proof}
{\sc Breakpoint-Median$_3$} is \NP-complete, proved independently by Bryant~\cite{bryant1998complexity} and Pe'er and Shamir~\cite{pe1998median}. 
Now, we prove that this problem remains \NP-complete if the input consists of three permutations that do not contain two consecutive breakpoints. 

Let $\pi_1, \pi_2, \pi_3$ be permutations of length $n$ that form an input to {\sc Breakpoint-Median$_3$}. 
For each permutation, we define an extension operation 
by transforming it into another one by 
replacing 
each element $\pi_x[i]$ with the pair 
$(2\times\pi_x[i]) - 1$ and $(2\times\pi_x[i])$ for $i = 1, \ldots, n$ and $x\in \{1,2,3\}$. 
Each of the three resulting permutations has length $2n$ and $n$ new adjacencies of the form $((2\times\pi_x[i])\!-\!1, \ \  2\times\pi_x[i])$, for $i = 1, \ldots, n$ and $x\in \{1,2,3\}$. 
As proved by Bryant~\cite{bryant1998complexity}, if an adjacency occurs in all of the input permutations, it also occurs in a solution of the {\sc Breakpoint Median} problem.
Hence, the solution of {\sc Breakpoint-Median$_3$} will contain all of these new adjacencies.
A pair $(\pi_x[i], \ \pi_x[i+1])$ is a breakpoint if and only if $(2\times\pi_x[i],\ (2\times\pi_x[i+1]) - 1)$ is a breakpoint in the extended permutations of $\pi_x$ for $x \in \{1,2,3\}$.
Moreover, by construction, the element $(2\times\pi_x[i]) - 1$ is followed by $(2\times\pi_x[i])$ in all extended permutations of $\pi_x$.
Therefore, any breakpoint is followed by an adjacency.
Otherwise, if $(2\times\pi_x[i],\ (2\times\pi_x[i+1]) - 1)$ is an adjacency, then there is a pair $(2\times\pi_y[j],\ (2\times\pi_y[j+1]) - 1)$, for $y \neq x$ and $j \in \{1,\ldots,n\}$ where $2\times\pi_x[i] = 2\times\pi_y[j]$ and $(2\times\pi_x[i+1]) - 1 = (2\times\pi_y[j+1]) - 1$.
Hence, any two permutations $\pi_x$ and $\pi_y$ have $d_{\sf BP}(\pi_x, \pi_y) = k$ if and only if their extended permutations $\pi'_x$ and $\pi'_y$ have $d_{\sf BP}(\pi'_x, \pi'_y) = k$. 
Therefore, $\sigma$ is a solution of {\sc Breakpoint-Median$_3$} for $\pi_1, \pi_2, \pi_3$ if and only if its extended permutation $\sigma'$ is a solution of {\sc Breakpoint-Median$_3$} for the extension of $\pi_1, \pi_2, \pi_3$.~\end{proof}

\noindent\autoref{thm:TreeMedian3NPC}.
\emph{
{\sc Tree-Median$_3(T_1,T_2,T_3)$} is \NP-complete, even for trees with height at most~$2$.
}

\begin{proof}
We reduce from {\sc Breakpoint-Median$_3$}, restricted to instances without
consecutive breakpoints (\autoref{thm:breakpointRestriction}) 
to {\sc Tree-Median$_3$}.  
Let $\pi_1,\pi_2,\pi_3$ be such an instance. 
Let $T_1$ be the star tree corresponding to $\pi_1$, i.e., a root whose
children are exactly the elements of $\pi_1$.  
For each breakpoint $(a,b)$ of $\pi_2$ with respect to $\pi_1$, we create in
$T_{2,1}$ a child of the root whose children are the leaves $a$ and $b$.
Repeating the same construction for the other permutation pairs yields 
$T_{3,1}$ and $T_{2,3}$.  
By construction, all these trees have height at most~2.
\autoref{fig:breakpoints} illustrates this transformation.

\vspace{-.3cm}
\begin{figure}[!ht]
    \centering
    \includegraphics[width=10cm]{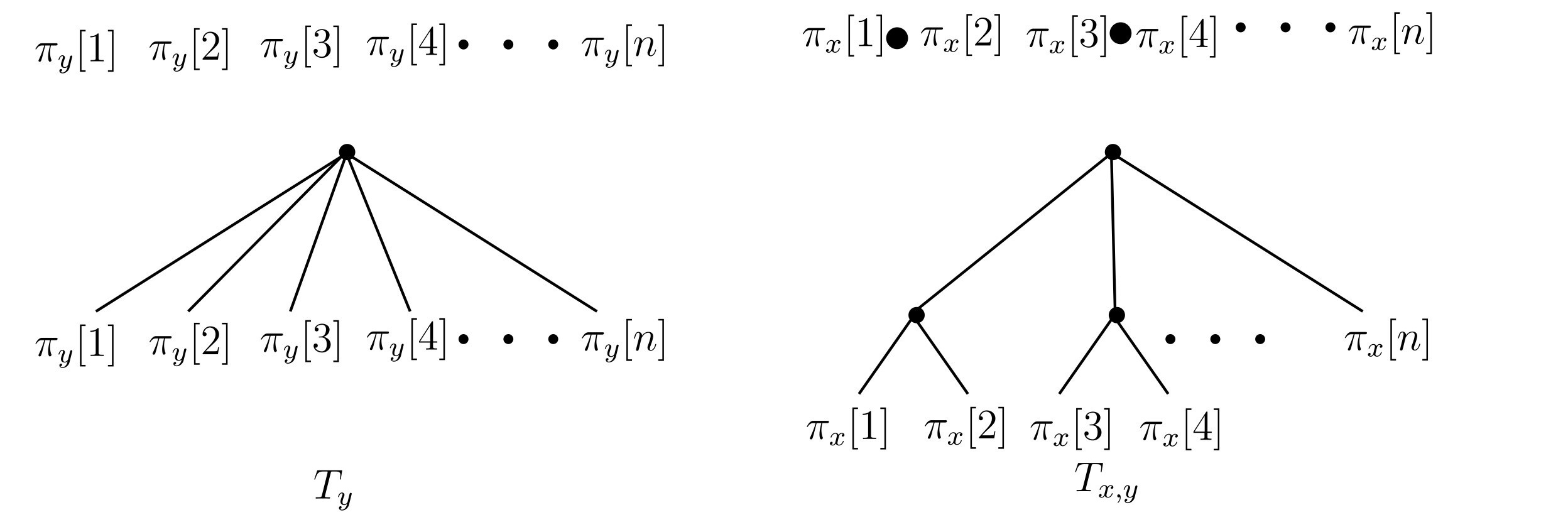}
\vspace{-.3cm}
    \caption{On the left there is a permutation $\pi_y$ with its corresponding tree $T_y$. On the right there is a permutation $\pi_x$ with its breakpoints with respect to $\pi_y$, along with its tree $T_{x,y}$. Bullets between $\pi_{x}[1]$ and $\pi_{x}[2]$, also between $\pi_{x}[3]$ and $\pi_{x}[4]$ represent breakpoints between $\pi_x$ and $\pi_y$. \label{fig:breakpoints}}
\end{figure}

\vspace{-.4cm}

Since each breakpoint $a\,b$ corresponds to a single subtree move needed to
place $a$ and $b$ as siblings under the root (as in the star tree), and no
consecutive breakpoints occur in the instance, we have 
$
d_{\sf BP}(\pi_x,\pi_y) = d_{\sf SBM}(T_{x,y},T_y).
$

Thus, the breakpoint distances are preserved under the tree construction. 
Because the median solution for $\{\pi_1,\pi_2,\pi_3\}$ corresponds exactly to
the median solution for $\{T_1,T_{2,1},T_{3,1}\}$, the reduction is polynomial,
and {\sc Tree-Median$_3$} is \NP-complete even for trees of height~2.
\end{proof}

\noindent\autoref{thm:Closest3NPC}.
\emph{
{\sc Tree-Closest$_3(T'_1,T'_2,T'_3)$} is \NP-complete.
}
\begin{proof}
We give a polynomial transformation from the {\sc Tree-Median$_3(T_1,T_2,T_3)$} problem. 
Let $T_1,T_2,T_3$ each have $n$ leaves. Hence, all elements from~$1$ to $n$ appear in every tree. 
First, we show in \autoref{fact} an upper bound on the distance between two trees. 

\begin{fact}\label{fact}
Let $h$ be the height of two trees $T_x$ and $T_y$ with $n$ leaves. Thus, $d_{SBM}(T_x, T_y) \leq 2nh$.
\end{fact}

\begin{proof}
In order to transform $T_x$ into $T_y$, 
in the worst case, each leaf of $T_x$ must be moved to become a child of the root $r_x$ and then moved to its correct position in $T_y$. 
Therefore, we apply at most $2h$ movements for each one of the $n$ elements.~\end{proof}

Now, we obtain a tree $T'_x$ (illustrated in \autoref{fig:path2}(i)), for $x\in \{1,2,3\}$ as follows. 
Let $r_x$ be the root of $T_x$, whose height is $h_x$. Then: 
\begin{enumerate}
    \item Create a path graph with $2nh_x+1$ nodes, $p_x^1, \ldots, p_x^{2nh_x+1}$.

    \item For each node $u$ of the path graph created above, except for the last one $p_x^{2nh_x+1}$, add a pendant node, i.e. a leaf node $u'$ whose parent is $u$. 

    \item For each leaf ${p_{x}^{i}}'$, child of a node $p_x^i$ created above, set its label as $i+n$, for $i = 1, \ldots, 2nh_x$. 

    \item Identify the nodes $p_x^{2nh_x+1}$ and $r_x$. 
\end{enumerate} 

\vspace{-.7cm}
\begin{figure}[!ht]
    \centering
    \includegraphics[width=8cm]{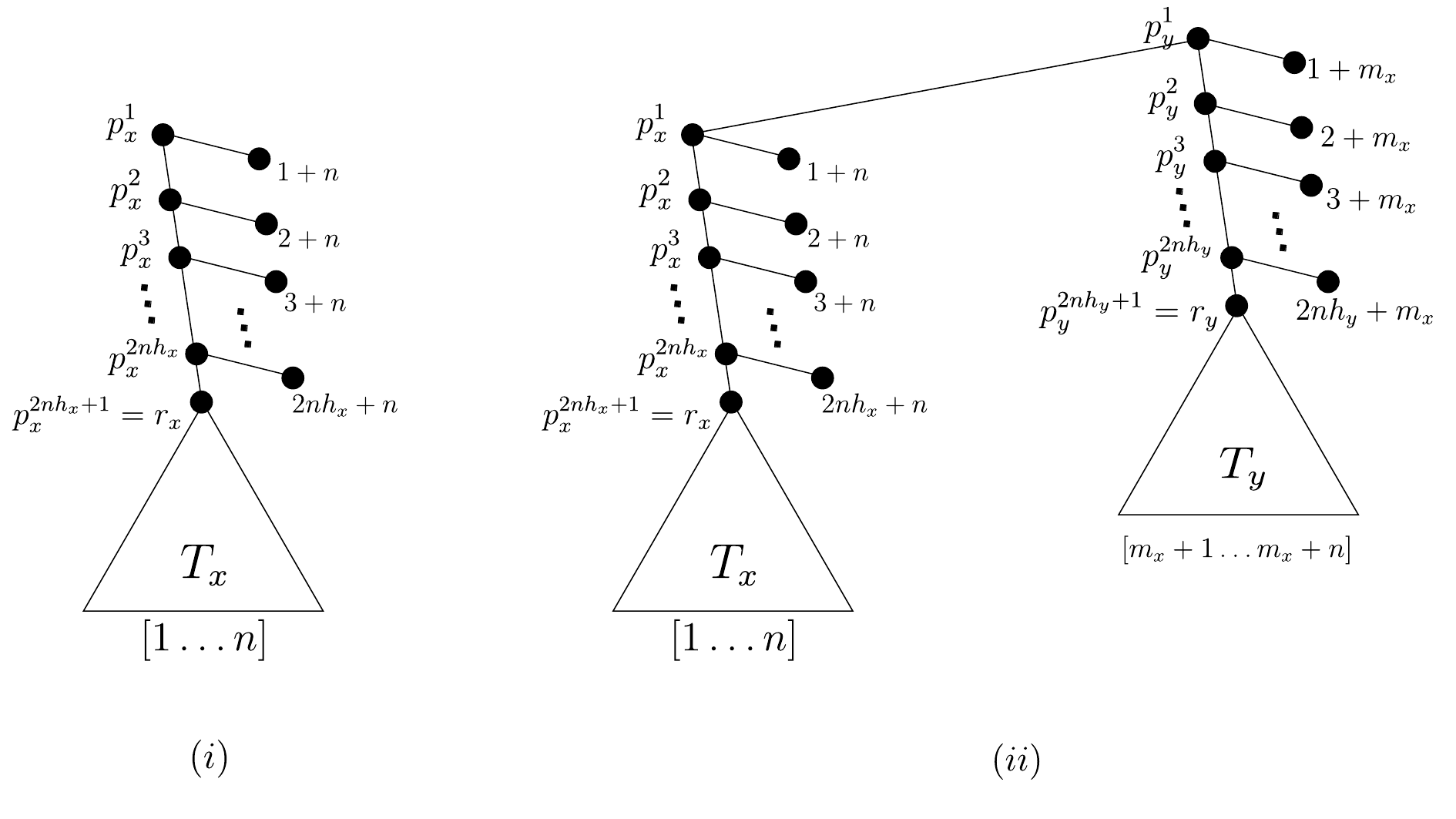}
\vspace{-.5cm}
    \caption{(i) Construction of $T'_x$ from the tree $T_x$ with $n$ leaves (with all elements from~$1$ to $n$) and height $h_x$. (ii) Tree obtained by the merge operation between $T'_x$ and $T'_y$, where $p_y^1$ is the root of the obtained tree $T'_{x,y}$.\label{fig:path2}}
\end{figure}

\vspace{-.5cm}
Now, given the trees $T'_x$ and $T'_y$ obtained from the previous procedure, 
we apply a \emph{merge operation} to them, producing a single tree $T'_{x,y}$ defined as follows (illustrated in \autoref{fig:path2}(ii)). 
Let $p_x^1$ and $p_y^1$ be the roots of $T'_x$ and $T'_y$, respectively, then:

\vspace{-.3cm}
\begin{enumerate}
    \item For each leaf $i$ of $T'_{y}$, for $i = 1, \ldots, 2nh_y +1$, relabel it as $i+m_x$, where $m_x$ is the number of leaves of $T'_x$, i.e. $m_x = 2nh_x +1$. 

    \item Add an edge between $p_x^1$ and $p_y^1$ in such a way that $p_x^1$ is a child of $p_y^1$. 
\end{enumerate}

\vspace{-.3cm}
The resulting tree is $T'_{x,y}$, containing all elements from $1$ to $2n(h_x + h_y) +2$.

Now, based on a general instance of the {\sc Tree-Median$_3(T_1,T_2,T_3)$} we build a particular instance of {\sc Tree-Closest$_3(T'_{1,2,3},T'_{2,3,1},T'_{3,1,2})$} by a series of merge operations, and we prove an equivalence between the corresponding solutions, as described in \autoref{lm:medianiffclosest}.


\begin{lemma}\label{lm:medianiffclosest}
$T_\sigma$ is a solution to {\sc Tree-Median$_3(T_1,T_2,T_3)$} such that the sum of the distances from $T_\sigma$ to $T_1,T_2$, and $T_3$ is $k$ if and only if $T'_{\sigma,\sigma,\sigma}$ is a solution to {\sc Tree-Closest$_3(T'_{1,2,3},T'_{2,3,1},T'_{3,1,2})$} such that the maximum distance from $T'_{\sigma,\sigma,\sigma}$ to its input trees is $k$.
\end{lemma}

\begin{proof}
Consider a solution $T_\sigma$ of {\sc Tree-Median$_3(T_1,T_2,T_3)$} and $d_{SBM}(T_1,T_\sigma) + d_{SBM}(T_2,T_\sigma) + d_{SBM}(T_3,T_\sigma) = k$.
Let $T_y$ be a solution of {\sc Tree-Closest$_3(T'_{1,2,3},T'_{2,3,1},T'_{3,1,2})$}.
Since $T'_{1,2,3}, T'_{2,3,1}$ and $T'_{3,1,2}$ are composed of three parts obtained through merge operations, 
and since more than $2nh$ elements have been added to each $T_x$, 
by \autoref{fact}, it is less costly to work with each part separately than to move an entire tree to another position. 
Therefore, $T_y$ is of the format $T'_{y_1,y_2,y_3}$, a merge of $T_{y_1}$, $T_{y_2}$ and $T_{y_3}$.
Since the {\sc closest} problem aims to minimize the maximum distance between the inputs and the solution, and the three parts can be treated independently, then $T_{y_1}$ is the tree that minimizes the maximum distance between $T_1, T_2$ and $T_3$ as input.
Similarly, $T_{y_2}$ for $T_2, T_3$ and $T_1$, 
and $T_{y_3}$ for $T_3, T_2$ and $T_1$.
Thus, there exists a solution in which $T_{y_1}$, $T_{y_2}$, and $T_{y_3}$ are equal.
Let this common tree be $T_\delta$, so that $T_y = T'_{\delta,\delta,\delta}$.
Without loss of generality, suppose that $T'_{2,3,1}$ attains the maximum distance from $T'_{\delta,\delta,\delta}$.
Then
$d_{SBM}(T'_{\delta,\delta,\delta},T'_{2,3,1})
= d_{SBM}(T_\delta,T_2)+d_{SBM}(T_\delta,T_3)+d_{SBM}(T_\delta,T_1)$.
By choosing $T_\sigma$ as an optimal solution of {\sc Tree-Median$_3(T_1,T_2,T_3)$}, we may assume $T_\delta=T_\sigma$.
Hence, $T'_{\delta,\delta,\delta}=T'_{\sigma,\sigma,\sigma}$, and $T'_{\sigma,\sigma,\sigma}$ is a solution of {\sc Tree-Closest$_3(T'_{1,2,3},T'_{2,3,1},T'_{3,1,2})$} with objective value~$k$.
~\end{proof}

Hence, \autoref{lm:medianiffclosest} concludes the proof.~\end{proof}

\end{document}